\newtheorem{theorem}{Theorem}
\newtheorem{proposition}[theorem]{Proposition}
\newtheorem{lemma}[theorem]{Lemma}
\newtheorem{corollary}[theorem]{Corollary}
\theoremstyle{definition}
\numberwithin{equation}{section}
\numberwithin{figure}{section}
\newcommand{\su}{\subseteq}
\renewcommand{\S}{\mathcal{S}}
\newcommand*\diff{\mathop{}\!\mathrm{d}}
\newcommand{\PP}{\operatorname{\mathbb{P}}}
\begin{document}
\title{On the probability of a Condorcet winner among a large number of alternatives}
\author{Lisa Sauermann\thanks{Massachusetts Institute of Technology, MA 02139. Email: {\tt lsauerma@mit.edu}. Research supported by NSF Award DMS-2100157.}}

\maketitle

\begin{abstract}\noindent
Consider $2k-1$ voters, each of which has a preference ranking between $n$ given alternatives. An alternative $A$ is called a Condorcet winner, if it wins against every other alternative $B$ in majority voting (meaning that for every other alternative $B$ there are at least $k$ voters who prefer $A$ over $B$). The notion of Condorcet winners has been studied intensively for many decades, yet some basic questions remain open. In this paper, we consider a model where each voter chooses their ranking randomly according to some probability distribution among all rankings. One may then ask about the probability to have a Condorcet winner with these randomly chosen rankings (which, of course, depends on $n$ and $k$, and the underlying probability distribution on the set of rankings). In the case of the uniform probability distribution over all rankings, which has received a lot of attention and is often referred to as the setting of an ``impartial culture'', we asymptotically determine the probability of having a Condorcet winner for a fixed number $2k-1$ of voters and $n$ alternatives with $n\to \infty$. This question has been open for around fifty years. While some authors suggested that the impartial culture should exhibit the lowest possible probability of having a Condorcet winner, in fact the probability can be much smaller for other distributions. We determine, for all values of $n$ and $k$, the smallest possible probability of having a Condorcet winner (and give an example of a probability distribution over all rankings which achieves this minimum possible probability).
\end{abstract}

\section{Introduction}

For some fixed number $k\geq 1$, suppose that $2k-1$ friends are planning to go on a vacation trip, staying all together in a vacation rental. They are looking at a large number of options, say $n$ options, that are listed on major vacation rental websites. In order for them to agree on booking a particular vacation rental $A$, it would be desirable that there is no other vacation rental $B$ which a majority of the friends prefers over $A$ (as otherwise it would make more sense to rather book option $B$ instead of $A$ if the majority of the group prefers that). In other words, for the group to decide to book some vacation rental $A$, there should be no vacation rental $B$ such that at least $k$ out of the $2k-1$ friends like $B$ better than $A$. What is the probability that the group of friends can indeed find a vacation rental $A$ with this property?

More formally, let $\S$ be the set of the $n$ vacation rental options, i.e.\ the set of the $n$ alternatives that the group is considering.  Each of the $2k-1$ friends can be considered to be a ``voter'' that has a particular preference ranking of the alternatives in $\S$. Let $P_{\S}$ be the set of all $n!$ possible rankings of $\S$ (i.e.\ the set of permutations of $\S$), and let $\sigma_1,\dots,\sigma_{2k-1}\in P_{\S}$ be the rankings of the $2k-1$ voters. 

We remark that in this paper, we always assume the set $\S$ of alternatives to be finite, albeit potentially very large. The setting where the set of alternatives is an infinite set of points forming continuous topological space has been studied for example by Plott \cite{plott}, McKelvey \cite{mckelvey-1,mckelvey-2}, and Schofield \cite{schofield}.

An alternative $A\in \S$ is called a \emph{Condorcet winner} if for every other alternative $B\in \S$ there are at least $k$ indices $i\in \{1,\dots,2k-1\}$ such that $A$ is ranked higher than $B$ in $\sigma_i$ (meaning that at least $k$ of the $2k-1$ voters prefer $X$ over $Y$). In other words, $A$ is a Condorcet winner if it wins against every other alternative $B$ in majority voting. It is easy to see that there can be at most one Condorcet winner in $\S$ (indeed, two different alternatives $A,B\in \S$ cannot both be Condorcet winners, since only of $A$ and $B$ wins in majority voting between $A$ and $B$). The notion of a Condorcet winner is named after Nicolas de Condorcet, who noted in 1785 \cite{condorcet} that already for $n=3$ alternatives it can happen that no such Condorcet winner exists. The observation that there does not always exist a Condorcet winner among the given alternatives is maybe somewhat suprising, and is commonly referred to as \emph{Condorcet's paradox}. There is an extensive body of research about Condorcet winners, see Gehrlein's book on this topic \cite{gehrlein-book} for an overview.

Often, the Condorcet winner problem is considered in the setting of elections, where typically there are many voters who vote on a small number of candidates (see for example \cite{bell, guilbaud, mossel, niemi-weisberg}). However, as in our example above concerning the group of friends looking for a vacation rental, there are also many settings where one has a small number of voters and many alternatives. In this paper, we will be particularly interested in the setting of having  $2k-1$ voters for fixed $k\geq 1$ and a large number $n$ of alternatives.

Our question above concerning the probability of the group of friends finding a vacation rental can now be rephrased as asking about  the probability to have a Condorcet winner in $\S$ with respect to (random) rankings $\sigma_1,\dots,\sigma_{2k-1}\in P_{\S}$. Of course, this depends on the random model for choosing $\sigma_1,\dots,\sigma_{2k-1}\in P_{\S}$.

A simple and natural model is to choose $\sigma_1,\dots,\sigma_{2k-1}\in P_{\S}$ independently and uniformly at random among all rankings in $P_\S$ (i.e.\ among all possible $n!$ rankings of the $n$ alternatives). In our example above, this would mean that there is no bias between the different vacation rental options, and so for each of the friends any of the $n!$ possible rankings of the $n$ options is equally likely (and the rankings of the $2k-1$ friends are all independent). In the literature, this is called an \emph{impartial culture}, and the problem of determining the probability of having a Condorcet winner in this setting has been studied since 1968. Specifically, Garman and Kamien \cite{garman-kamien} as well as DeMeyer and Plott \cite{demeyer-plott} calculated the probability of having a Condorcet wnner for various small values of $n$ and $k$ and conjectured that this probability tends to zero as $n\to \infty$ for a fixed number of $2k-1$ voters (further calculations  for small values of $n$ and $k$ can be found in \cite{gehrlein-fishburn}). This conjecture was proved by May \cite{may} who also asymptotically determined the probability of a Condorcet winner for $3$ voters (i.e.\ for $k=2$) and a large number of alternatives. However, for $k>2$ even the order of magnitude of the probability of having a Condorcet winner was not known. Our first result asymptotically determines this probability.

\begin{theorem}\label{thm-impartial-culture}
For any fixed number $k\geq 1$ and a (large) set $\S$ of $n$ alternatives, let us consider $2k-1$ voters with independently and uniformly random rankings $\sigma_1,\dots,\sigma_{2k-1}\in P_{\S}$. Then the probability that there is Condorcet winner is
\[C_k\cdot n^{-(k-1)/k}+O_k\left(\frac{(\ln n)^{1/k}}{n}\right),\]
where $0<C_k<\infty$ is given by the $(2k-1)$-fold integral
\begin{equation}\label{eq-def-C-k-integral}
C_k=\int_0^\infty \dots \int_0^\infty \exp(-\sigma_k(x_1,\dots,x_{2k-1}))\diff x_1 \dots \diff x_{2k-1}.
\end{equation}
\end{theorem}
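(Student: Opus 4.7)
The plan is to compute the probability via the expected number of Condorcet winners. Since at most one alternative can be a Condorcet winner, this expectation equals the probability we want, and by symmetry it equals $n$ times the probability that a single fixed alternative $A\in\S$ is a Condorcet winner. So I reduce to estimating $\PP[A\text{ is a Condorcet winner}]$ with absolute error $O_k((\ln n)^{1/k}/n^{(2k-1)/k})$, so that the final answer (after multiplication by $n$) has error $O_k((\ln n)^{1/k}/n)$.

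I would condition on the positions of $A$ in the $2k-1$ rankings, recording $t_i$ = the number of alternatives ranked above $A$ by voter $i$. These $t_i$ are independent and uniform on $\{0,\dots,n-1\}$, and given them the sets $U_i\su\S\sm\{A\}$ of alternatives above $A$ are independent uniform random subsets of sizes $t_i$. Now $A$ is a Condorcet winner iff no other alternative $B$ lies in at least $k$ of the $U_i$. For a fixed $B$ the events $B\in U_i$ are independent across $i$ with probabilities $t_i/(n-1)$, so the probability that $B$ beats $A$ is approximately $\sigma_k(t_1,\dots,t_{2k-1})/(n-1)^k$ whenever the $t_i/(n-1)$ are small. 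The ``bad-$B$'' events are only weakly correlated as $B$ varies, and a Janson-- or Chen--Stein--type argument should give
\[
\PP\bigl[A\text{ is a Condorcet winner}\mid t_1,\dots,t_{2k-1}\bigr]=\exp\!\bigl(-\sigma_k(t_1,\dots,t_{2k-1})/(n-1)^{k-1}\bigr)\bigl(1+o(1)\bigr).
\]
Substituting $x_i=t_i/n^{(k-1)/k}$ rescales the exponent to $\sigma_k(\vec x)(1+o(1))$ by homogeneity of $\sigma_k$, and converts $(1/n)^{2k-1}\sum_{\vec t}\exp(\cdot)$ into a Riemann sum of mesh $n^{-(k-1)/k}$ in $2k-1$ dimensions for the integral~\eqref{eq-def-C-k-integral}, contributing $n^{-(2k-1)/k}C_k(1+o(1))$. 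Multiplying by $n$ yields the main term $C_k\cdot n^{-(k-1)/k}$.

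The main obstacle is squeezing all errors into the claimed $O_k((\ln n)^{1/k}/n)$. The strategy would be to: (i) truncate the sum at $\max_i t_i\leq L\cdot n^{(k-1)/k}$ with $L=C(\ln n)^{1/k}$, where on this range the Poisson/Janson approximation error and the Taylor error in replacing $\sigma_k(\vec t)/(n-1)^{k-1}$ by $\sigma_k(\vec x)$ are both small enough; (ii) bound the tail contribution (where some $t_i$ exceeds the threshold) by combining an estimate of the form $\PP[A\text{ is a CW}\mid\vec t]\leq\exp(-c\,\sigma_k(\vec t)/(n-1)^{k-1})$ for large $\sigma_k$ with a separate analysis of the atypical regime where $\sigma_k$ is small but some $t_i$ is close to $n$ (in that regime only very few configurations of the other $t_j$'s are even compatible with $A$ being a Condorcet winner); and (iii) control the Riemann-sum discretization error on $[0,L]^{2k-1}$, which at mesh $n^{-(k-1)/k}$ is negligible compared to the target once $L=O((\ln n)^{1/k})$. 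Choosing $L$ of order $(\ln n)^{1/k}$ is precisely what produces the stated power of $\ln n$: a smaller truncation would leave too much tail, and a larger one would blow up the Poisson-approximation error.
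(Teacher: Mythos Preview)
Your overall strategy is sound and close in spirit to the paper's, but there is one arithmetic slip and one unnecessary detour worth flagging.

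First, the slip: the error you target for $\PP[A\text{ is a Condorcet winner}]$ is too weak. If you only control it to within $O_k\bigl((\ln n)^{1/k}/n^{(2k-1)/k}\bigr)$, then after multiplying by $n$ you get an error of order $(\ln n)^{1/k}/n^{(k-1)/k}$, which is the same order as the main term and certainly not $O_k((\ln n)^{1/k}/n)$. You actually need $\PP[A\text{ is CW}]=C_k\,n^{-(2k-1)/k}+O_k\bigl((\ln n)^{1/k}/n^{2}\bigr)$.

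Second, the detour: by conditioning only on the \emph{ranks} $t_i$ of $A$, you leave the sets $U_i$ random, which makes the events ``$B$ beats $A$'' genuinely correlated across $B$ and forces you into a Janson/Chen--Stein argument plus a Riemann-sum discretization. The paper avoids both complications with a single trick: generate each ranking by drawing i.i.d.\ uniform points $x_i^{(\ell)}\in[0,1]$ for the alternatives, and condition on the \emph{continuous} positions $x_i:=x_i^{(1)}$ of $A$. Then for each $B\neq A$ the event ``$B$ beats $A$'' depends only on $B$'s own points, so these events are \emph{exactly} independent, and one gets the closed form
\[
\PP[A\text{ is CW}]=\int_{[0,1]^{2k-1}}\bigl(1-Q(x_1,\dots,x_{2k-1})\bigr)^{n-1}\,dx_1\cdots dx_{2k-1},
\]
with $Q(\vec x)$ the probability that a single uniform point vector $(y_1,\dots,y_{2k-1})$ has $y_i<x_i$ for at least $k$ indices. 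The rest of the paper's argument is then purely analytic: bound $Q$ above and below by $\sigma_k(\vec x)$ up to a multiplicative factor $1+O_k\bigl(\sigma_k(\vec x)^{1/k}\bigr)$ via Bonferroni and Maclaurin, split the cube according to whether $Q\le 2(\ln n)/(n-1)$, replace $(1-Q)^{n-1}$ by $e^{-(n-1)Q}$ using $e^{-t-t^2}\le 1-t\le e^{-t}$, and substitute $z_i=n^{1/k}x_i$ to extract the factor $n^{-(2k-1)/k}C_k$. Your truncation at $L=C(\ln n)^{1/k}$ corresponds exactly to the paper's split, and your intuition about the origin of the $(\ln n)^{1/k}$ is correct; but the continuous conditioning removes the Poisson-approximation and discretization errors entirely, which is what makes the tight $O_k((\ln n)^{1/k}/n^2)$ bound on $\PP[A\text{ is CW}]$ clean to obtain.
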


Here, $\sigma_k(x_1,\dots,x_{2k-1})=\sum_{I\su \{1,\dots,2k-1\}, |I|=k} \prod_{i\in I} x_i$ denotes the usual $k$-th elementary symmetric polynomial, i.e.\ the sum of the products of any $k$ out of the $2k-1$ variables $x_1,\dots,x_{2k-1}$.

In particular, this means that for a fixed number of $2k-1$ voters and a large number of alternatives, in an impartial culture the probability that a Condorcet winner exists is of the form $(C_k+o(1))\cdot n^{-(k-1)/k}$. We remark that for $k=2$ (i.e.\ for $3$ voters) this was already proved by May \cite{may} with $C_3=\pi^{3/2}/2\approx 2.78$, but our error term $O_k((\ln n)^{1/k}/n)$ in Theorem \ref{thm-impartial-culture} is stronger than in May's result. For $k>2$, it was previously not even known that the probability is on the order of $n^{-(k-1)/k}$.

In real life settings, it is often not the case that all possible rankings of the $n$ alternatives are equally likely. In our example above of the friends that are considering options for their vacation, there is likely some correlation between how much a person likes different options. For example, if someone likes to go to the beach, they are likely to rank the vacation rental options on beaches fairly high. In contrast, someone who likes hiking in the mountains is likely to rank the options in the mountains higher. More generally, vacation rental options that are similar in some way (e.g.\ being on the beach or being in the mountains) are more likely to be close to each other in someone's ranking. It therefore makes sense to also consider other, non-uniform probability distributions on the set $P_\S$ of all rankings of the $n$ alternatives.

So let us consider a probability distribution $\pi$ on the set $P_\S$ of all $n!$ rankings of $\S$. This probability distribution corresponds to how likely it is for a person to have a specific ranking of the $n$ alternatives in $\S$. In the literature, such a probability distribution is often referred to as a \emph{culture}. In an impartial culture, as in Theorem \ref{thm-impartial-culture} the probability distribution $\pi$ is uniform on $P_\S$ (i.e.\ it assigns probability $1/n!$ to each ranking in $P_\S$), but now we allow any probability distribution (which can take into account that it is likely to for similar alternatives to be ranked close to each other). Note that it is in particular possible for $\pi$ to assign probability zero to some rankings in $P_\S$ (if there are rankings that cannot reasonably occur).

Now, consider rankings $\sigma_1,\dots,\sigma_{2k-1}$ that are chosen independently according to the probability distribution $\pi$ on $P_\S$. One may again ask what the probability of having a Condorcet winner is in this setting. Obviously, this depends on the probability distribution $\pi$.

It is easy to see that the probability of a Condorcet winner can be as large as $1$ (i.e.\ there may always exist a Condorcet winner). For example, if $\pi$ is a probability distribution that assigns a particular ranking of $\S$ probability $1$ (and all other rankings probability $0$), then $\sigma_1,\dots,\sigma_{2k-1}$ are always equal to this particular ranking and so the highest-ranked alternative in this ranking is clearly a Condorcet winner. In other words, if the probability distribution $\pi$ is concentrated on just a single ranking in $P_\S$, there is always a Condorcet winner.

One might expect that the lowest possible probability of having a Condorcet winner occurs in the opposite case, where $\pi$ is uniformly distributed among all rankings in $P_\S$ (i.e.\ in the setting of an impartial culture, as in Theorem \ref{thm-impartial-culture}). However, perhaps surprisingly, this is not true. The probability of a Condorcet winner can in fact be much smaller than in Theorem \ref{thm-impartial-culture}. Specifically, consider the probability distribution $\pi^*$ defined as follows. Let $\S=\{A_1,\dots,A_n\}$ and consider the $n$ ``cyclic-looking'' rankings of the form $(A_i, A_{i+1},....,A_n,A_1,\dots,A_{i-1})$ for $i=1,\dots,n$. Now let $\pi^*$ assign probability $1/n$ to each of these $n$ rankings (and probability $0$ to all other rankings). Then it turns out (see below) that for a large number $n$ of alternatives, the probability of a Condorcet winner is only on the order of $n^{-(k-1)}$ (which is much smaller than $n^{-(k-1)/k}$ as in Theorem \ref{thm-impartial-culture}).

A similar phenomenon was observed by Garman and Kamien \cite[p.~315]{garman-kamien} in the setting of $n=3$ alternatives and a large number of voters. They \cite[p.~314]{garman-kamien} suggested to classify a probability distribution $\pi$ on $P_\S$ as ``similar'' if (for a large number of voters) the probability of having a Condorcet winner is bigger than for the uniform distribution (i.e.\ for the impartial culture), and to classify $\pi$ as ``antagonistic'' if the probability of having a Condorcet winner is smaller than for the uniform distribution. However, while it is clear that that the maximum possible probability of having a Condorcet winner is $1$, the actual minimum possible probability has not been determined.

Answering this question, our next theorem determines the minimum  possible probability of having a Condorcet winner for $2k-1$ voters and $n$ alternatives for any positive integers $n$ and $k$. This minimum possible probability of having a Condorcet winner is attained by the probability distribution $\pi^*$ described above. In other words, in the setting of a large number of alternatives discussed above, this probability distribution $\pi^*$ does not only yield to a much smaller probability for a Condorcet winner than the uniform distribution, but $\pi^*$ actually gives the smallest possible probability among all probability distributions on $P_\S$.

\begin{theorem}\label{thm-minimum-probability}
For some $n\geq 1$, let $\S=\{A_1,\dots,A_n\}$ be a set of $n$ alternatives. Then for any $k\geq 1$ and any probability distribution $\pi$ on the set $P_\S$ of all rankings of $\S$, the probability that there is a Condorcet winner when $2k-1$ voters independently choose rankings $\sigma_1,\dots,\sigma_{2k-1}\in P_\S$ according to the probability distribution $\pi$ is at least
\begin{equation}\label{eq-term-minimum-probability}
n^{-(2k-2)}\cdot \sum_{\ell=0}^{k-1}\binom{2k-1}{\ell}(n-1)^\ell.
\end{equation}
Furthermore, the probability of having a Condorcet winner is equal to the term in (\ref{eq-term-minimum-probability}) for the probability distribution $\pi^*$ on $P_\S$ defined by taking each of the rankings $(A_i, A_{i+1},....,A_n,A_1,\dots,A_{i-1})$ for $i=1,\dots,n$ with probability $1/n$, and all other rankings with probability $0$.
\end{theorem}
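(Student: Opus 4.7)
First, I would rewrite the target quantity~\eqref{eq-term-minimum-probability} probabilistically. Setting
\[f(x)=\PP(\operatorname{Bin}(2k-1,x)\geq k)=\sum_{m=k}^{2k-1}\binom{2k-1}{m}x^m(1-x)^{2k-1-m}\]
and substituting $\l=2k-1-m$, a one-line rearrangement shows that~\eqref{eq-term-minimum-probability} equals $n\cdot f(1/n)$. So the whole theorem amounts to: the Condorcet-winner probability is at least $nf(1/n)$ for every $\pi$, with equality at $\pi^*$.

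The engine of the lower bound is the elementary observation: \emph{if at least $k$ of the $2k-1$ voters share the same top-ranked alternative $A_j$, then $A_j$ is a Condorcet winner} (those $k$ voters rank $A_j$ above every other alternative). Let $\mu_j$ denote the $\pi$-probability that a random ranking places $A_j$ on top, so $\sum_j\mu_j=1$. The $n$ events ``$\geq k$ voters have $A_j$ on top'' are pairwise disjoint since $2k>2k-1$, so
\[\PP(\text{Condorcet winner exists})\geq \sum_{j=1}^n f(\mu_j).\]

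Next I reduce to the deterministic inequality $\sum_{j=1}^n f(\mu_j)\geq nf(1/n)$ on the simplex $\{\mu\geq 0,\;\sum_j\mu_j=1\}$. A short computation gives $f'(x)=(2k-1)\binom{2k-2}{k-1}[x(1-x)]^{k-1}$, whose derivative has the sign of $1-2x$ for $k\geq 2$, so $f$ is convex on $[0,1/2]$ (the case $k=1$ is trivial, since $f(x)=x$). Let $\mu_1=\max_j\mu_j\geq 1/n$. If $\mu_1\leq 1/2$, all coordinates lie in the region of convexity and Jensen's inequality immediately gives $\sum_jf(\mu_j)\geq nf(1/n)$. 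If $\mu_1>1/2$, then $\mu_2,\dots,\mu_n\leq 1-\mu_1<1/2$, so Jensen applied only to these $n-1$ coordinates yields $\sum_jf(\mu_j)\geq g(\mu_1)$, where $g(t):=f(t)+(n-1)f\bigl((1-t)/(n-1)\bigr)$. A one-variable computation using the explicit form of $f'$ shows $g'(t)\geq 0$ precisely when $t\cdot n(n-2)\geq n-2$, i.e.\ when $t\geq 1/n$, and a direct substitution gives $g(1/n)=nf(1/n)$. Hence $g(\mu_1)\geq nf(1/n)$, completing the lower bound. The main obstacle is exactly this non-convexity of $f$ on $[1/2,1]$, handled by peeling off the one potentially large coordinate and exploiting one-variable monotonicity.

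Finally, for equality at $\pi^*$: each voter independently picks $i_v\in\{1,\dots,n\}$ uniformly and adopts the cyclic ranking starting at $A_{i_v}$, so $\mu_j=1/n$ for every $j$. Moreover, the top-element bound is tight here, because \emph{$A_j$ is a Condorcet winner under $\pi^*$ iff at least $k$ voters have $i_v=j$}. The ``if'' direction is the top-element observation above. Conversely, voter $v$ ranks $A_j$ above its cyclic predecessor $A_{j-1}$ (indices modulo $n$) iff, going clockwise from $A_{i_v}$, one meets $A_j$ strictly before $A_{j-1}$, and since $A_{j-1}$ immediately precedes $A_j$ this forces $i_v=j$; hence $A_j$ being a Condorcet winner requires at least $k$ voters with $i_v=j$. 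Consequently the Condorcet probability under $\pi^*$ equals $n\cdot\PP(\operatorname{Bin}(2k-1,1/n)\geq k)=nf(1/n)$, matching~\eqref{eq-term-minimum-probability}.
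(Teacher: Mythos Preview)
Your proof is correct and follows the same architecture as the paper: rewrite~\eqref{eq-term-minimum-probability} as $nf(1/n)$, bound the Condorcet probability below by $\sum_j f(\mu_j)$ via the ``majority shares a top element'' observation (with disjointness of these events), reduce to the simplex optimization $\sum_j f(\mu_j)\geq nf(1/n)$, and handle $\pi^*$ by the cyclic-predecessor argument. The one substantive difference is in the non-convex case $\mu_1>1/2$ of the optimization: the paper argues by taking a global minimizer and using the first-order condition $p_k'(x_1)=p_k'(x_n)$ to force $x_1+x_n=1$ (and all other coordinates zero), then invokes the identity $p_k(x)+p_k(1-x)=1$. You instead apply Jensen to the $n-1$ small coordinates and reduce to the one-variable function $g(t)=f(t)+(n-1)f((1-t)/(n-1))$, showing directly that $g$ is nondecreasing on $[1/n,1]$ with $g(1/n)=nf(1/n)$. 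Your route is slightly more self-contained (no compactness/minimizer step), while the paper's route makes the structure of the extremizers more transparent; both rest on the same derivative formula $f'(x)=(2k-1)\binom{2k-2}{k-1}(x(1-x))^{k-1}$.
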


We stress again that Theorem \ref{thm-minimum-probability} applies to any positive numbers $n$ and $k$, and does not require an assumption of the number $n$ of alternatives being large compared to $k$. Buckley \cite[p.\ 113]{buckley} proved the special case of $n=3$ and $k=2$ of Theorem  \ref{thm-minimum-probability} in 1975 (i.e.\ in the case of three voters deciding between three alternatives).

The term  (\ref{eq-term-minimum-probability}) that gives the precise answer for the minimum possible probability of having a Condorcet winner cannot be simplified for general $n$ and $k$ (there is unfortunately no way to write this sum in a closed form). However, there are ways to express this term asymptotically if $n$ is large with respect to $k$ or vice versa.

If, as in the setting discussed earlier, the number $2k-1$ of voters is fixed, and the number of alternatives is larger (as in the example of the $2k-1$ friends looking for a vacation rental), then the minimum possible probability as determined by Theorem \ref{thm-minimum-probability} has the form
\[n^{-(2k-2)}\cdot \binom{2k-1}{k-1}(n-1)^{k-1}+O_k(n^{-k})=\binom{2k-1}{k}\cdot n^{-(k-1)}+O_k(n^{-k}).\]
We remark that for large $n$ this probability is much smaller than the probability of having a Condorcet winner for the uniform probability distribution on $P_\S$ (i.e.\ for an impartial culture). Indeed, this the latter probability is asymptotically equal to $C_k \cdot n^{-(k-1)/k}$ by Theorem \ref{thm-impartial-culture}.

On the other hand, if there is a fixed number $n\geq 3$ of alternatives and there are $2k-1$ voters for $k\to\infty$, then the minimum possible probability as determined by Theorem \ref{thm-minimum-probability} has the form
\[n^{-2k}\cdot \binom{2k-1}{k-1}\cdot (n-1)^{k}\cdot \exp(O_n(1))=\exp\left(-\ln\left(\frac{n^2}{4(n-1)}\right)\cdot k+O_n(\log k)\right).\]
So for fixed $n\geq 3$, this probability decays exponentially with $k$. Again, for the uniform probability distribution (i.e.\ for an impartial culture) the probability of having a Condorcet winner is much larger. In fact, for every fixed number $n\geq 3$ of alternatives, the latter probability converges to some real number strictly between $0$ and $1$ as $k\to\infty$ (see e.g.\ \cite[p. 320]{niemi-weisberg}). Note that for $n\in \{1,2\}$ alternatives, the probability of having a Condorcet winner is always equal to $1$ for any probability distribution on $P_\S$.

These results are in sharp contrast to Gehrlein's claim \cite{gehrlein-2002} that the probability of having a Condorcet winner is minimal for an impartial culture (i.e.\ for the uniform distribution $\pi$). More precisely, Gehrlein \cite[p.\ 197]{gehrlein-2002} argues that an impartial culture (as well as other models of ``balanced preferences'' that are described in his paper) gives a lower bound for the probability of a Condorcet winner in more general situations. He \cite[p.\ 177]{gehrlein-2002} also writes ``In general, intuition suggests that we would be most likely to observe this paradox on the pairwise majority rule relations when voters' preferences are closest to being balanced between all pairs of candidates'' and, with regards to this statement, ``it does seem to be a generally valid claim''. However, as shown above, for the probability distribution $\pi^*$ defined in Theorem \ref{thm-minimum-probability}, a Condorcet winner exists with much smaller probability than in an impartial culture (i.e.\ for the uniform probability distribution on $P_\S$). This disproves Gehrlein's claim.

In fact, this probability distribution $\pi^*$ achieves the actual minimum possible probability of having a Condorcet winner. Interestingly, while the distribution $\pi^*$ is balanced (or more precisely, symmetric) between all of the candidates, it is not ``balanced between all pairs of candidates'' as Gehrlein \cite[p.\ 177]{gehrlein-2002} suggested. For example, looking at alternatives $A_1$ and $A_2$, a voter whose ranking $\sigma$ is chosen according to the probability distribution $\pi^*$ prefers $A_1$ over $A_2$ with probability $(n-1)/n$. Thus, for $n\geq 2$ the probability distribution $\pi^*$ is far from balanced when looking at the pair of alternatives $A_1$ and $A_2$. Still, as shown in Theorem \ref{thm-minimum-probability}, the probability distribution $\pi^*$ minimizes the probability of a Condorcet winner among all possible probability distributions on $P_\S$. This again strongly disproves Gehrlein's claim \cite[p.\ 177]{gehrlein-2002}.

We remark that Tsetlin et al.\ \cite{tsetlin-et-al} studied the problem of minimizing the probability of a Condorcet winner in the setting of three candidates under the additional assumption that the probability distribution $\pi$ induces a transitive weak majority preference relationship (see the assumption of \cite[Theorem 3]{tsetlin-et-al}). Under this additional restriction, they proved that the impartial culture minimizes the probability of a Condorcet winner for three alternatives, and they conjectured the same to be true for more than three alternatives.

In conclusion, depending on the probability distribution $\pi$, the probability of having a Condorcet winner can be anywhere between $1$ and the probability in (\ref{eq-term-minimum-probability}). If there is no bias or correlation between the $n$ alternatives, i.e.\ in the setting of an impartial culture, then the probability is asymptotically equal to $C_k\cdot n^{-(k-1)/k}$ with $C_k$ as in Theorem \ref{thm-impartial-culture}. This probability is in some sense neither close to the upper bound $1$ nor to the lower bound  in (\ref{eq-term-minimum-probability}). Coming back to our example from the beginning, one should hope that the relevant probability distribution $\pi$ leads to a much higher probability of a Condorcet winner, since it would be frustrating for the group of friends to be able to agree on a vacation rental option only with probability tending to zero for large $n$. It is indeed plausible to expect that, given that most likely some of the vacation rental options are inherently better than others, and so one should expect the probability distribution $\pi$ to be fairly biased towards these ``better'' alternatives. In this light, it is not surprising that in practice a group of friends is usually able to find a vacation rental that suits their needs.

\textit{Notation and Organization.} For a positive integer $m$, we abbreviate the set $\{1,\dots,m\}$ by $[m]$ (which is a common notation). We use standard asymptotic $O$-notation for fixed $k\geq 1$ and $n\to \infty$. In order to emphasize the possible dependence on the fixed number $k$, we write, for example, $O_k(1/n)$ for a term which is bounded in absolute value by a term of the form $D_k \cdot (1/n)$ for some constant $D_k>0$ depending on $k$.

For $\ell\leq m$, and variables $x_1,\dots,x_m\in \mathbb{R}$, the $\ell$-th elementary symmetric polynomial of $x_1,\dots,x_m$ is defined to be
\[\sigma_\ell(x_1,\dots,x_{m})=\sum_{\substack{I\su [m]\\ |I|=\ell}} \,\prod_{i\in I} x_i,\]
i.e.\ it is the sum of the products of any $\ell$ out of the $m$ variables $x_1,\dots,x_{m}$. Note that this is a homogeneous polynomial of degree $\ell$, and, furthermore, for non-negative $x_1,\dots,x_m$, this polynomial $\sigma_\ell(x_1,\dots,x_{m})$ is always non-negative.

We will prove Theorem \ref{thm-impartial-culture} in Section \ref{sect-proof-thm-impartial-culture}, postponing the proofs of some lemmas to Section \ref{sect-lemmas}. Afterwards, we prove Theorem \ref{thm-minimum-probability} in Section \ref{sect-proof-minimum-probability}.

\textit{Acknowledgements.} The author would like to thank Asaf Ferber, Matthew Kwan and Elchanan Mossel for helpful discussions.

\section{The probability of a Condorcet winner in an impartial culture}
\label{sect-proof-thm-impartial-culture}

In this section, we prove Theorem \ref{thm-impartial-culture}, which concerns the probability of having a Condorcet winner in an impartial culture for $2k-1$ voters and a large number of alternatives.

First, note that the theorem is trivially true for $k=1$. Indeed, for $k=1$, there is only one voter and therefore a Condorcet winner exists with probability $1$. On the other hand,
\[C_1=\int_0^\infty \exp(-\sigma_1(x_1))\diff x_1 =\int_0^\infty \exp(-x_1)\diff x_1 =1\]
and therefore we have $C_1\cdot n^{(1-1)/1}=1$, as desired.

So let us from now on fix $k\geq 2$. Since we are proving an asymptotic statement, we may assume that $n$ is sufficiently large (with respect to $k$). In particular, we can assume that $2(\ln n)/(n-1)\leq 1/3$ .

We have a set $\S=\{A_1,\dots,A_n\}$ of $n$ alternatives and $2k-1$ voters choose rankings $\sigma_1,\dots,\sigma_{2k-1}$ independently and uniformly at random form the set $P_\S$ of all $n!$ rankings of $\S$. We can model the random choice of these rankings as follows.

Let us assume that each voter $i$ for $i=1,\dots,2k-1$ picks $n$ random real numbers $x_i^{(1)},\dots,x_i^{(n)}$ independently uniformly at random from the interval $[0,1]$ (and this happens independently for all voters). Then with probability $1$ these $n$ points $x_i^{(1)},\dots,x_i^{(n)}$ are distinct. Let the ranking $\sigma_i$ of $\S=\{A_1,\dots,A_n\}$ be obtained by recording the order of the points $x_i^{(1)},\dots,x_i^{(n)}$ in the interval $[0,1]$, meaning that voter $i$ ranks the alternative $A_\ell$ first (i.e.\ highest) for which the number $x_i^{(\ell)}$ is the smallest among $x_i^{(1)},\dots,x_i^{(n)}$ (and then the alternative $A_{\ell'}$ second for which the $x_i^{(\ell')}$ is the second-smallest number and so on). This way, we indeed obtain a uniformly random ranking $\sigma_i\in P_\S$ for voter $i$, and these rankings are independent for all the voters $i=1,\dots,2k-1$.

Note that since there is always at most one Condorcet winner, the total probability that a Condorcet winner exists is the sum of of the probabilities that $A_\ell$ is a Condorcet winner for all $\ell=1,\dots,n$. In other words,
\[\PP(\text{Condorcet winner exists})=\sum_{\ell=1}^{n}\PP(A_\ell\text{ is  Condorcet winner}).\]
Note that since the rankings $\sigma_i$ are independent uniformly random in $P_\S$, each alternative $A_\ell$ for $\ell=1,\dots,n$ is equally likely to be a Condorcet winner. Hence the summands on the right-hand side of the previous equation are equal for all $\ell=1,\dots,n$ and we obtain that
\[\PP(\text{Condorcet winner exists})=\sum_{\ell=1}^{n}\PP(A_\ell\text{ is  Condorcet winner})=n\cdot \PP(A_1\text{ is  Condorcet winner}).\]
Hence, in order to prove Theorem \ref{thm-impartial-culture}, it suffices to show that
\begin{equation}\label{eq-term-prob-one-alternative}
\PP(A_1\text{ is  Condorcet winner}) = C_k\cdot n^{-(2k-1)/k}+O_k\left(\frac{(\ln n)^{1/k}}{n^2}\right),
\end{equation}
where $C_k$ is given by (\ref{eq-def-C-k-integral}) and that the integral in (\ref{eq-def-C-k-integral}) is indeed finite (it is clear that the integral is positive, but it could a priori be infinite). The finiteness of the integral follows from the first part of the following more general lemma.

\begin{lemma}\label{lem-integral-finite} For any positive integers $\ell<m$, we have
\[\int_0^\infty \dots \int_0^\infty \exp(-\sigma_\ell(x_1,\dots,x_{m}))\diff x_1 \dots \diff x_{m}\leq (m!)^2\]
Furthermore, defining $0<C_{\ell,m}<\infty$ to be the value of this integral, then in the case of $\ell\geq 2$ we have
\[\int_0^a \dots \int_0^a \exp(-\sigma_\ell(x_1,\dots,x_{m}))\diff x_1 \dots \diff x_{m}\geq C_{\ell,m} - m\cdot (m-1) \cdot ((m-1)!)^2 \cdot a^{-(m-\ell)/(\ell-1)}\]
for every $a\geq 1$.
\end{lemma}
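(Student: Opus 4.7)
The plan is to prove both parts by induction on $m$ (at all values of $\ell < m$ simultaneously), driven by a single truncation-plus-rescaling tail bound. For $\ell = 1$ (in particular, the base case $m = 2$), the polynomial $\sigma_1$ is linear, so $C_{1,m} = \int_{[0,\infty)^m} e^{-(x_1+\cdots+x_m)}\,dx = 1 \leq (m!)^2$, and the second claim is only asserted for $\ell \geq 2$.

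Fix $\ell \geq 2$ and $\ell < m$, and let $a \geq 1$. By symmetry among the $m$ coordinates and a union bound,
\[\int_{[0,\infty)^m \sm [0,a]^m} e^{-\sigma_\ell(x)}\,dx \;\leq\; m \int_a^\infty \int_{[0,\infty)^{m-1}} e^{-\sigma_\ell(x_1,\dots,x_{m-1},t)}\,dx_1\cdots dx_{m-1}\,dt.\]
From the identity $\sigma_\ell(x_1,\dots,x_{m-1},t) = \sigma_\ell(x_1,\dots,x_{m-1}) + t\,\sigma_{\ell-1}(x_1,\dots,x_{m-1})$ I drop the first (nonnegative) summand to obtain $\sigma_\ell \geq t\,\sigma_{\ell-1}(x_1,\dots,x_{m-1})$. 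Then substitute $u_i = t^{1/(\ell-1)} x_i$ in the inner integral: since $\sigma_{\ell-1}$ is homogeneous of degree $\ell-1$, this rescales $t\,\sigma_{\ell-1}(x)$ to $\sigma_{\ell-1}(u)$ and produces Jacobian $t^{-(m-1)/(\ell-1)}$, so the inner integral is at most $t^{-(m-1)/(\ell-1)} C_{\ell-1,m-1}$. The outer $t$-integral $\int_a^\infty t^{-(m-1)/(\ell-1)}\,dt$ converges precisely because $\ell < m$, and it equals $\tfrac{\ell-1}{m-\ell}\,a^{-(m-\ell)/(\ell-1)}$.

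Invoking the inductive bound $C_{\ell-1,m-1} \leq ((m-1)!)^2$ together with the elementary inequality $\tfrac{\ell-1}{m-\ell} \leq m-1$ (valid for $\ell \leq m-1$), the tail is at most $m(m-1)((m-1)!)^2 \cdot a^{-(m-\ell)/(\ell-1)}$. Taking $a = 1$ gives $C_{\ell,m} \leq 1 + m(m-1)((m-1)!)^2 \leq (m!)^2$ (using $1 \leq m\,((m-1)!)^2$), closing the induction and proving the first claim; subtracting the same tail bound from $C_{\ell,m}$ immediately gives the second. The only nonroutine step is choosing the crude lower bound $\sigma_\ell \geq t\,\sigma_{\ell-1}(\hat x_m)$ together with the matching homogeneous substitution $u_i = t^{1/(\ell-1)}x_i$, which produce the sharp decay exponent $(m-\ell)/(\ell-1)$; everything else (the union bound, the Fubini swap, and the algebra closing the induction) is entirely routine, which is why the rather forgiving bound $(m!)^2$ suffices.
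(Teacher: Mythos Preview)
Your proof is correct and follows essentially the same approach as the paper: the same union bound over the coordinates, the same crude lower bound $\sigma_\ell(x_1,\dots,x_{m-1},t)\geq t\,\sigma_{\ell-1}(x_1,\dots,x_{m-1})$, the same rescaling $u_i=t^{1/(\ell-1)}x_i$, and the same appeal to $C_{\ell-1,m-1}\leq((m-1)!)^2$ to close the induction. The only cosmetic difference is that you phrase the induction as on $m$ whereas the paper phrases it as on $\ell$ (and isolates the union bound as a separate preparatory lemma), but since the reduction step is $(\ell,m)\to(\ell-1,m-1)$ either framing works.
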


We postpone the proof of Lemma \ref{lem-integral-finite} to Section \ref{sect-lemmas}. Applying Lemma \ref{lem-integral-finite} to $\ell=k$ and $m=2k-1$ (noting that $k<2k-1$ by our assumption that $k\geq 2$) implies that the integral in (\ref{eq-def-C-k-integral}) is at most $((2k-1)!)^2$, and so in particular it is finite. So we have $0<C_k<\infty$ and it now suffices to prove (\ref{eq-term-prob-one-alternative}).

Recall that alternative $A_1$ is a Condorcet winner if and only if for each $\ell=2,\dots,n$ there are at least $k$ voters $i\in [2k-1]$ such that voter $i$ ranks $A_1$ before $A_\ell$. In other words, for each $\ell=2,\dots,n$ there must be at least $k$ different indices $i\in [2k-1]$ such that $x_i^{(1)}\leq x_i^{(\ell)}$.

Remembering that all $x_i^{(\ell)}$ for $i=1,\dots,2k-1$ and $\ell=1,\dots, n$ are independent uniformly random real numbers in the interval $[0,1]$, we can imagine that we first sample the random numbers $x_1^{(1)},\dots,x_{2k-1}^{(1)}$ (i.e.\ the numbers corresponding to alternative $A_1$). For simplicity, let us write $x_i=x_i^{(1)}$ for $i=1,\dots,2k-1$. Then $A_1$ is a Condorcet winner if and only if  for each $\ell=2,\dots,n$ the random numbers $x_1^{(\ell)},\dots,x_{2k-1}^{(\ell)}$ satisfy the condition
\begin{equation}\label{eq-condition-for-each-ell}
x_i\leq x_i^{(\ell)}\text{ for at least }k\text{ indices }i\in [2k-1].
\end{equation}

Given the values of $x_1,\dots,x_{2k-1}\in [0,1]$, let $Q(x_1,\dots,x_{2k-1})$ be the probability that for independent uniformly random variables $y_1,\dots,y_{2k-1}$ in the interval $[0,1]$ we have $x_i\leq y_i$ for at most $k-1$ indices $i\in [2k-1]$. Then each $\ell=2,\dots,n$ satisfies  condition (\ref{eq-condition-for-each-ell}) with probability precisely $1-Q(x_1,\dots,x_{2k-1})$ and these events are independent for all $\ell=2,\dots,n$ (conditioning on the values of $x_1,\dots,x_{2k-1}$). Hence, conditioning on the values of $x_1,\dots,x_{2k-1}$, the probability of $A_1$ being Condorcet winner is $(1-Q(x_1,\dots,x_{2k-1}))^{n-1}$. Thus, we obtain
\begin{equation}\label{eq-expression-with-Q}
\PP(A_1\text{ is  Condorcet winner}) = \int_0^1 \dots \int_0^1 \left(1-Q(x_1,\dots,x_{2k-1})\right)^{n-1}\diff x_1 \dots \diff x_{2k-1}.
\end{equation}

The following lemma gives some helpful estimates for $Q(x_1,\dots,x_{2k-1})$ in terms of $\sigma_k(x_1,\dots,x_{2k-1})$.

\begin{lemma}\label{lemma-bounds-Q} For real numbers $x_1,\dots,x_{2k-1}\in [0,1]$, we have
\[2^{-2k+1}\cdot \sigma_k(x_1,\dots,x_{2k-1})\leq Q(x_1,\dots,x_{2k-1})\leq \sigma_k(x_1,\dots,x_{2k-1})\]
and furthermore
\[Q(x_1,\dots,x_{2k-1})\geq \sigma_k(x_1,\dots,x_{2k-1})-2^{4k-2}\cdot (\sigma_k(x_1,\dots,x_{2k-1}))^{(k+1)/k}.\]
\end{lemma}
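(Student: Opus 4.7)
My plan is to interpret $Q(x_1,\dots,x_{2k-1})$ in a way that lets me apply the method of factorial moments. Writing $y_1,\dots,y_{2k-1}$ for the independent uniform samples in $[0,1]$ and $S=\sum_{i=1}^{2k-1}\mathbf{1}_{\{y_i<x_i\}}$, the indicators are independent Bernoullis with $\PP(y_i<x_i)=x_i$, and the definition of $Q$ translates to $Q(x_1,\dots,x_{2k-1})=\PP(S\geq k)$. The key identity is that by linearity of expectation,
\[\mathbb{E}\!\left[\binom{S}{k}\right]=\sum_{\substack{I\su[2k-1]\\|I|=k}}\prod_{i\in I}x_i=\sigma_k(x_1,\dots,x_{2k-1}),\]
and analogously $\mathbb{E}\!\left[\binom{S}{k+1}\right]=\sigma_{k+1}(x_1,\dots,x_{2k-1})$.

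Both simple bounds are then immediate from trivial pointwise inequalities on $\binom{S}{k}$. For the upper bound I use $\binom{S}{k}\geq \mathbf{1}_{\{S\geq k\}}$, which upon taking expectations gives $\sigma_k\geq Q$. For the lower bound I use $\binom{S}{k}\leq\binom{2k-1}{k}\mathbf{1}_{\{S\geq k\}}\leq 2^{2k-1}\mathbf{1}_{\{S\geq k\}}$, so that $\sigma_k\leq 2^{2k-1}Q$, i.e.\ $Q\geq 2^{-2k+1}\sigma_k$.

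For the refined lower bound I want to exploit that the "overcounting" in $\binom{S}{k}\geq\mathbf{1}_{\{S\geq k\}}$ happens only when $S\geq k+1$. Splitting the expectation,
\[\sigma_k=\mathbb{E}\!\left[\binom{S}{k}\right]=Q+\sum_{s=k+1}^{2k-1}\Bigl(\tbinom{s}{k}-1\Bigr)\PP(S=s)\leq Q+\sum_{s\geq k+1}\tbinom{s}{k}\PP(S=s).\]
The identity $\binom{s}{k}=\frac{k+1}{s-k}\binom{s}{k+1}\leq(k+1)\binom{s}{k+1}$ for $s\geq k+1$ bounds the tail by $(k+1)\,\mathbb{E}[\binom{S}{k+1}]=(k+1)\sigma_{k+1}$, giving $Q\geq\sigma_k-(k+1)\sigma_{k+1}$.

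The remaining task—and, I expect, the main obstacle in getting the constants right—is to control $\sigma_{k+1}$ by $\sigma_k^{(k+1)/k}$. This is exactly Maclaurin's inequality: for the normalized symmetric means $e_j=\sigma_j/\binom{2k-1}{j}$ of the nonnegative reals $x_1,\dots,x_{2k-1}$, one has $e_{k+1}^{1/(k+1)}\leq e_k^{1/k}$, hence
\[\sigma_{k+1}=\tbinom{2k-1}{k+1}e_{k+1}\leq\tbinom{2k-1}{k+1}\tbinom{2k-1}{k}^{-(k+1)/k}\sigma_k^{(k+1)/k}\leq 2^{2k-1}\sigma_k^{(k+1)/k},\]
using $\binom{2k-1}{k}\geq 1$ and $\binom{2k-1}{k+1}\leq 2^{2k-1}$. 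Combining with the previous inequality and the elementary estimate $(k+1)\cdot 2^{2k-1}\leq 2^{4k-2}$ (which holds for every $k\geq 1$) yields $Q\geq\sigma_k-2^{4k-2}\sigma_k^{(k+1)/k}$, as required.
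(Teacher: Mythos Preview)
Your proof is correct. The factorial-moment framework you use is closely related to the paper's event-based argument but is organized differently enough to be worth a comparison.

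The paper works directly with the events $\mathcal{E}_I=\{y_i<x_i\text{ for all }i\in I\}$ over $k$-element subsets $I$: the upper bound is the union bound $Q\leq\sum_I\PP[\mathcal{E}_I]=\sigma_k$; the crude lower bound picks a single $J$ with $\PP[\mathcal{E}_J]\geq 2^{-2k+1}\sigma_k$ by pigeonhole; and the refined lower bound uses Bonferroni, $Q\geq\sum_I\PP[\mathcal{E}_I]-\sum_{I\neq I'}\PP[\mathcal{E}_I\cap\mathcal{E}_{I'}]$, then bounds each pairwise term by a product over some $(k+1)$-subset and counts pairs to get $Q\geq\sigma_k-2^{4k-2}\sigma_{k+1}$, after which Maclaurin (using $\binom{2k-1}{k+1}\leq\binom{2k-1}{k}$) gives $\sigma_{k+1}\leq\sigma_k^{(k+1)/k}$.

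Your route packages all three bounds through the single identity $\mathbb{E}\bigl[\binom{S}{j}\bigr]=\sigma_j$, which is elegant. Your intermediate estimate $Q\geq\sigma_k-(k+1)\sigma_{k+1}$ is in fact sharper than the paper's $Q\geq\sigma_k-2^{4k-2}\sigma_{k+1}$; you then give some of that back in the Maclaurin step by bounding $\binom{2k-1}{k+1}\binom{2k-1}{k}^{-(k+1)/k}\leq 2^{2k-1}$ rather than by $1$ as the paper does. Either combination lands comfortably inside the stated constant $2^{4k-2}$, so the end result is the same; the factorial-moment viewpoint buys a tidier derivation, while the paper's Bonferroni argument is slightly more hands-on but yields a cleaner Maclaurin constant.
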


We postpone the proof of Lemma \ref{lemma-bounds-Q} to Section \ref{sect-lemmas}. The proof relies on Bonferroni's inequalities in probability theory, as well as on Newton's inequality for elementary symmetric functions. For our argument, we will use the following corollary of Lemma \ref{lemma-bounds-Q}.

\begin{corollary}\label{coro-Q-close-sigma-k}
If $x_1,\dots,x_{2k-1}\in [0,1]$ are real numbers with $Q(x_1,\dots,x_{2k-1})\leq 2 (\ln n)/(n-1)$, then we have
\[Q(x_1,\dots,x_{2k-1})\geq \left(1-2^{4k}\cdot \left(\frac{\ln n}{n-1}\right)^{1/k}\right)\cdot \sigma_k(x_1,\dots,x_{2k-1}).\]
\end{corollary}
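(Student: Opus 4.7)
The plan is to combine the three estimates provided by Lemma \ref{lemma-bounds-Q} in a purely algebraic manner. The key observation is that the hypothesis $Q(x_1,\dots,x_{2k-1})\leq 2(\ln n)/(n-1)$ on $Q$, together with the first inequality of Lemma \ref{lemma-bounds-Q}, immediately gives an upper bound on $\sigma_k(x_1,\dots,x_{2k-1})$; this bound can then be fed into the third inequality from Lemma \ref{lemma-bounds-Q} to turn the multiplicative error of order $\sigma_k^{1/k}$ into a multiplicative error of order $(\ln n/(n-1))^{1/k}$.

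Concretely, I would first rearrange the lower bound $2^{-2k+1}\cdot \sigma_k(x_1,\dots,x_{2k-1})\leq Q(x_1,\dots,x_{2k-1})$ from Lemma \ref{lemma-bounds-Q} as
\[\sigma_k(x_1,\dots,x_{2k-1})\leq 2^{2k-1}\cdot Q(x_1,\dots,x_{2k-1})\leq 2^{2k-1}\cdot \frac{2\ln n}{n-1}=2^{2k}\cdot \frac{\ln n}{n-1},\]
using the hypothesis on $Q$ in the second inequality. Taking $k$-th roots yields
\[\sigma_k(x_1,\dots,x_{2k-1})^{1/k}\leq 4\cdot \left(\frac{\ln n}{n-1}\right)^{1/k}.\]

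Next, I would invoke the third inequality of Lemma \ref{lemma-bounds-Q},
\[Q(x_1,\dots,x_{2k-1})\geq \sigma_k(x_1,\dots,x_{2k-1})-2^{4k-2}\cdot \sigma_k(x_1,\dots,x_{2k-1})^{(k+1)/k}=\sigma_k(x_1,\dots,x_{2k-1})\cdot\left(1-2^{4k-2}\cdot \sigma_k(x_1,\dots,x_{2k-1})^{1/k}\right),\]
and substitute the bound on $\sigma_k^{1/k}$ just obtained. This replaces $2^{4k-2}\cdot \sigma_k^{1/k}$ by at most $2^{4k-2}\cdot 4\cdot (\ln n/(n-1))^{1/k}=2^{4k}\cdot (\ln n/(n-1))^{1/k}$, which gives precisely
\[Q(x_1,\dots,x_{2k-1})\geq \left(1-2^{4k}\cdot \left(\frac{\ln n}{n-1}\right)^{1/k}\right)\cdot \sigma_k(x_1,\dots,x_{2k-1}),\]
as claimed.

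There is no real obstacle here: the corollary is a direct combination of the two directions of Lemma \ref{lemma-bounds-Q}, and the only nontrivial step is noticing that the upper bound on $Q$ (via the easy lower bound $2^{-2k+1}\sigma_k\leq Q$) controls $\sigma_k^{1/k}$, which is exactly the quantity appearing as the multiplicative defect in the sharper lower bound on $Q$. The numerical constant $2^{4k}$ in the statement is exactly what one obtains from $2^{4k-2}\cdot 4$, so no slack needs to be absorbed.
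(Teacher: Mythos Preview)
Your proof is correct and is essentially identical to the paper's own argument: both use the lower bound $2^{-2k+1}\sigma_k\leq Q$ together with the hypothesis to get $\sigma_k^{1/k}\leq 4\cdot(\ln n/(n-1))^{1/k}$, and then substitute this into the sharper lower bound $Q\geq \sigma_k-2^{4k-2}\sigma_k^{(k+1)/k}$ to obtain the stated inequality with constant $2^{4k-2}\cdot 4=2^{4k}$.
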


Recall that by Lemma \ref{lemma-bounds-Q}, we always have  $Q(x_1,\dots,x_{2k-1})\leq \sigma_k(x_1,\dots,x_{2k-1})$ if $x_1,\dots,x_{2k-1}\in [0,1]$. This means that under the assumptions in Corollary \ref{coro-Q-close-sigma-k}, the value of $Q(x_1,\dots,x_k)$ is actually fairly close to $\sigma_k(x_1,\dots,x_{2k-1})$.

\begin{proof}[Proof of Corollary \ref{coro-Q-close-sigma-k} assuming Lemma \ref{lemma-bounds-Q}]
When combining the assumption $Q(x_1,\dots,x_{2k-1})\leq 2 (\ln n)/n$ of the corollary with the first inequality in Lemma \ref{lemma-bounds-Q}, we obtain
\[\sigma_k(x_1,\dots,x_{2k-1})\leq 2^{2k-1}\cdot Q(x_1,\dots,x_{2k-1})\leq 2^{2k-1}\cdot 2 \cdot \frac{\ln n}{n-1} =4^k\cdot \frac{\ln n}{n-1}.\]
Hence
\[\sigma_k(x_1,\dots,x_{2k-1})^{(k+1)/k}=\sigma_k(x_1,\dots,x_{2k-1})^{1/k}\cdot \sigma_k(x_1,\dots,x_{2k-1})\leq  4\cdot \left(\frac{\ln n}{n-1}\right)^{1/k}\cdot  \sigma_k(x_1,\dots,x_{2k-1}),\]
and from the second part of Lemma \ref{lemma-bounds-Q} we obtain
\[Q(x_1,\dots,x_{2k-1})\geq \sigma_k(x_1,\dots,x_{2k-1})-2^{4k-2} \sigma_k(x_1,\dots,x_{2k-1})^{(k+1)/k} \geq \left(1-2^{4k} \left(\frac{\ln n}{n-1}\right)^{1/k}\right) \sigma_k(x_1,\dots,x_{2k-1}),\]
as desired.
\end{proof}

In order to estimate the integral in (\ref{eq-expression-with-Q}), we also need the following lemma, which states, roughly speaking, that for $0\leq t\leq 1/3$, the function $1-t$ is fairly close to $e^{-t}$. The lemma follows relatively easily from Taylor's theorem, and we postpone the proof details to Section \ref{sect-lemmas}.

\begin{lemma}\label{lem-taylor}
For $0\leq t\leq 1/3$, we have
\[e^{-t-t^2}\leq 1-t\leq e^{-t}.\]
\end{lemma}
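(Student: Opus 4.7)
The plan is to treat the two inequalities independently, each by the same elementary device: exhibit an auxiliary function that vanishes at $t=0$ and has non-negative derivative on the relevant interval, so that it is non-negative throughout.

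For the upper bound $1-t \leq e^{-t}$, I would set $h(t) = e^{-t}-(1-t)$. This is the classical inequality $e^{x} \geq 1+x$ (applied with $x=-t$) and in fact holds for all real $t$, not only on $[0,1/3]$; one verifies it from $h(0)=0$ together with $h'(t)=1-e^{-t}\geq 0$ on $[0,\infty)$, so $h\geq 0$ there.

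For the lower bound $e^{-t-t^2}\leq 1-t$, since $1-t>0$ on $[0,1/3]$ I would take logarithms and reduce to the equivalent inequality $\ln(1-t)\geq -t-t^2$. Setting $g(t)=\ln(1-t)+t+t^2$, a direct computation gives
\[g'(t) = -\frac{1}{1-t}+1+2t = \frac{-1+(1+2t)(1-t)}{1-t} = \frac{t(1-2t)}{1-t},\]
which is non-negative on $[0,1/2]\supseteq [0,1/3]$ (all three factors $t$, $1-2t$, $1-t$ are non-negative there). Combined with $g(0)=0$, this forces $g\geq 0$ on $[0,1/3]$, giving the lower bound.

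There is no serious obstacle here; both halves are routine one-variable calculus, consistent with the author's remark that the lemma follows from Taylor's theorem. One could alternatively derive the lower bound from the Taylor expansion $\ln(1-t)=-t-t^2/2-t^3/3-\cdots$ by summing a geometric tail for $t\leq 1/3$, but the sign-of-derivative argument above is cleaner. The constant $1/3$ in the hypothesis is not tight — the argument actually works up to $t=1/2$ — but it suffices for the intended application to (\ref{eq-expression-with-Q}), where the relevant $t$ will be bounded by $2(\ln n)/(n-1)\leq 1/3$ by the standing assumption made at the start of Section \ref{sect-proof-thm-impartial-culture}.
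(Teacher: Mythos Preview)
Your proof is correct. Both inequalities are handled cleanly, and the derivative computation $g'(t)=\dfrac{t(1-2t)}{1-t}$ is accurate.

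Your approach differs from the paper's. The paper argues via Taylor's theorem with Lagrange remainder: from $e^{y}=1+y+\tfrac{e^{\xi}}{2}y^{2}$ with $\xi\in[y,0]$ it extracts the two-sided bound $1+y\leq e^{y}\leq 1+y+\tfrac{1}{2}y^{2}$ for $y<0$, applies the upper side at $y=-t-t^{2}$, and then performs a short polynomial computation $1-t-t^{2}+\tfrac12(t+t^{2})^{2}\leq 1-t$ that uses $t\leq 1/3$ at the final step. Your route instead takes logarithms and reduces the lower bound to a sign-of-derivative argument for $g(t)=\ln(1-t)+t+t^{2}$. This is arguably tidier: it sidesteps the polynomial algebra entirely and, as you observe, actually yields the stronger range $t\in[0,1/2]$, whereas the paper's chain of inequalities genuinely needs $t\leq 1/3$ in its last step. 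The paper's version has the minor virtue of matching its own framing (``follows relatively easily from Taylor's theorem''), but your argument is at least as elementary and slightly sharper.
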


In order to show (\ref{eq-term-prob-one-alternative}), let us start by rewriting (\ref{eq-expression-with-Q}) by splitting up the $(2k-1)$-fold integral on the right-hand side into two domains. Let
\[D=\{(x_1,\dots,x_{2k-1})\in [0,1]^{2k-1}\mid Q(x_1,\dots,x_{2k-1})\leq 2 (\ln n)/(n-1)\}\]
be the domain of those $(x_1,\dots,x_{2k-1})$ in $[0,1]^{2k-1}$ for which we have $Q(x_1,\dots,x_{2k-1})\leq 2 (\ln n)/(n-1)$. Note that we can apply Corollary \ref{coro-Q-close-sigma-k} to any $(x_1,\dots,x_{2k-1})\in D$. Furthermore, note that for any $(x_1,\dots,x_{2k-1})\in [0,1]^{2k-1}\setminus D$, by Lemma \ref{lem-taylor} we have
\begin{equation}\label{eq-outside-D}
\left(1-Q(x_1,\dots,x_{2k-1})\right)^{n-1}\leq \left(1-2\cdot \frac{\ln n}{n-1}\right)^{n-1}\leq \left(\exp\left(-\frac{2\ln n}{n-1}\right)\right)^{n-1}=n^{-2}.
\end{equation}

From (\ref{eq-expression-with-Q}), we now obtain
\begin{align}
\PP(A_1\text{ is  Condorcet winner}) &= \int_0^1 \dots \int_0^1 \left(1-Q(x_1,\dots,x_{2k-1})\right)^{n-1}\diff x_1 \dots \diff x_{2k-1}\notag\\
&= \int_{[0,1]^{2k-1}} \left(1-Q(x_1,\dots,x_{2k-1})\right)^{n-1} \diff^{2k-1} (x_1,\dots,x_{2k-1})\notag\\
&=\int_{D} \left(1-Q(x_1,\dots,x_{2k-1})\right)^{n-1} \diff^{2k-1} (x_1,\dots,x_{2k-1})\notag\\
&\quad\quad+\int_{[0,1]^{2k-1}\setminus D} \left(1-Q(x_1,\dots,x_{2k-1})\right)^{n-1} \diff^{2k-1} (x_1,\dots,x_{2k-1})\label{eq-sum-two-integrals}
\end{align}

Let us now show upper and lower bounds for sum in (\ref{eq-sum-two-integrals}). First, as an upper bound we have by (\ref{eq-outside-D}) and Lemma \ref{lem-taylor} (noting that $Q(x_1,\dots,x_{2k-1})\leq 2(\ln n)/(n-1)\leq 1/3$ for all $(x_1,\dots,x_{2k-1})\in D$)
\begin{align}
\PP(A_1\text{ is  Condorcet winner})&\leq \int_{D} \left(1-Q(x_1,\dots,x_{2k-1})\right)^{n-1} \diff^{2k-1} (x_1,\dots,x_{2k-1})+\operatorname{Vol}([0,1]^{2k-1}\setminus D)\cdot  n^{-2}\notag\\
&\leq \int_{D} \left(\exp(-Q(x_1,\dots,x_{2k-1}))\right)^{n-1} \diff^{2k-1} (x_1,\dots,x_{2k-1})+ n^{-2}\notag\\\
&= \int_{D} \exp(-Q(x_1,\dots,x_{2k-1})\cdot (n-1)) \diff^{2k-1} (x_1,\dots,x_{2k-1})+ n^{-2}.\label{eq-ineq-step-upper-bound}
\end{align}
Note that for $(x_1,\dots,x_{2k-1})\in D$ we can apply Corollary \ref{coro-Q-close-sigma-k} and obtain
\begin{align*}
Q(x_1,\dots,x_{2k-1})\cdot (n-1)&\geq \left(1-2^{4k}\cdot \left(\frac{\ln n}{n-1}\right)^{1/k}\right)\cdot \sigma_k(x_1,\dots,x_{2k-1})\cdot (n-1)\\
&=\left((n-1)-2^{4k}(\ln n)^{1/k}(n-1)^{(k-1)/k}\right)\cdot \sigma_k(x_1,\dots,x_{2k-1})\\
&\geq \left(n-2^{4k+1}(\ln n)^{1/k}n^{(k-1)/k}\right)\cdot \sigma_k(x_1,\dots,x_{2k-1}).
\end{align*}
Combining this with (\ref{eq-ineq-step-upper-bound}) yields
\begin{align}
&\PP(A_1\text{ is  Condorcet winner})\notag\\
&\quad \leq \int_{D} \exp\left(-\left(n-2^{4k+1}(\ln n)^{1/k}n^{(k-1)/k}\right)\sigma_k(x_1,\dots,x_{2k-1})\right) \diff^{2k-1} (x_1,\dots,x_{2k-1})+ n^{-2}\notag\\
&\quad\leq \int_0^\infty \dots \int_0^\infty \exp\left(-\left(n-2^{4k+1}(\ln n)^{1/k}n^{(k-1)/k}\right)\sigma_k(x_1,\dots,x_{2k-1})\right) \diff x_1 \dots \diff x_{2k-1}+ n^{-2}\notag\\
&\quad= \left(n-2^{4k+1}(\ln n)^{1/k}n^{(k-1)/k}\right)^{-(2k-1)/k}\int_0^\infty \dots \int_0^\infty \exp(-\sigma_k(z_1,\dots,z_{2k-1})) \diff z_1 \dots \diff z_{2k-1}+ n^{-2}\notag\\
&\quad= \left(1+O_k\left(\frac{(\ln n)^{1/k}}{n^{1/k}}\right)\right)\cdot n^{-(2k-1)/k}\cdot C_k+ n^{-2}=C_k\cdot n^{-(2k-1)/k}+O_k\left(\frac{(\ln n)^{1/k}}{n^2}\right),\label{eq-ineq-upper-bound-done}
\end{align}
where the first equality sign is obtained by substituting $z_i=(n-2^{4k+1}k^2(\ln n)^{1/k}n^{(k-1)/k})^{1/k}\cdot x_i$ for $i=1,\dots,2k-1$ (recalling that $\sigma_k$ is a homogeneous polynomial of degree $k$). This finishes the proof of the upper bound in (\ref{eq-term-prob-one-alternative}).

For the lower bound, let us return to (\ref{eq-sum-two-integrals}), and apply Lemma \ref{lem-taylor} to the first integral (recall that $Q(x_1,\dots,x_{2k-1})\leq 2 (\ln n)/(n-1)\leq 1/3$ for all $(x_1,\dots,x_{2k-1})\in D$). This gives
\begin{align}
\PP(A_1\text{ is  Condorcet winner}) &\geq\int_{D} \left(1-Q(x_1,\dots,x_{2k-1})\right)^{n-1} \diff^{2k-1} (x_1,\dots,x_{2k-1})\notag\\
&\geq\int_{D} \exp\left(-(Q(x_1,\dots,x_{2k-1})+Q(x_1,\dots,x_{2k-1})^2)\cdot (n-1)\right) \diff^{2k-1} (x_1,\dots,x_{2k-1})\notag\\
&\geq\int_{D} \exp\left(-Q(x_1,\dots,x_{2k-1})\cdot \left(1+2 \frac{\ln n}{n-1}\right)\cdot (n-1)\right) \diff^{2k-1} (x_1,\dots,x_{2k-1})\notag\\
&\geq\int_{D} \exp\left(-Q(x_1,\dots,x_{2k-1})\cdot (n+2\ln n)\right) \diff^{2k-1} (x_1,\dots,x_{2k-1})\notag\\
&=\int_{D} \exp\left(-Q(x_1,\dots,x_{2k-1})\cdot f(n)\right) \diff^{2k-1} (x_1,\dots,x_{2k-1}),\label{eq-step-loer-bound}
\end{align}
using the short-hand notation $f(n)=n+2\ln n$. The integral in (\ref{eq-step-loer-bound}) is very close to the analogous integral taken over $[0,1]^{2k-1}$ instead of over $D$. Indeed, recall that for any $(x_1,\dots,x_{2k-1})\in [0,1]^{2k-1}\setminus D$ we have $Q(x_1,\dots,x_{2k-1})\cdot f(n)\geq 2 (\ln n)/(n-1)\cdot (n+2\ln n)\geq 2\ln n$ and therefore
\[\int_{[0,1]^{2k-1}\setminus D} \exp\left(-Q(x_1,\dots,x_{2k-1})\cdot f(n)\right) \diff^{2k-1} (x_1,\dots,x_{2k-1})\leq \int_{[0,1]^{2k-1}\setminus D} n^{-2} \diff^{2k-1} (x_1,\dots,x_{2k-1})\leq n^{-2}.\]
Thus, (\ref{eq-step-loer-bound}) implies
\begin{align*}
\PP(A_1\text{ is  Condorcet winner})&\geq\int_{[0,1]^{2k-1}} \exp\left(-Q(x_1,\dots,x_{2k-1})\cdot f(n)\right) \diff^{2k-1} (x_1,\dots,x_{2k-1}) - n^{-2}\\
&\geq\int_0^1 \dots \int_0^1\exp\left(-\sigma_k(x_1,\dots,x_{2k-1})\cdot f(n)\right) \diff x_1\dots \diff x_{2k-1} - n^{-2},
\end{align*}
where for the second inequality we used the upper bound on $Q(x_1,\dots,x_{2k-1})$ in Lemma \ref{lemma-bounds-Q}. Let us now use the substitution $z_i=f(n)^{1/k}\cdot x_i$ for $i=1,\dots,2k-1$ (again recalling that $\sigma_k$ is a homogeneous polynomial of degree $k$). This yields
\begin{align*}
\PP(A_1\text{ is  Condorcet winner})&\geq f(n)^{-(2k-1)/k}\int_0^{f(n)^{1/k}} \dots \int_0^{f(n)^{1/k}}\exp\left(-\sigma_k(x_1,\dots,x_{2k-1})\right) \diff z_1\dots \diff z_{2k-1} - n^{-2},\\
&\geq f(n)^{-(2k-1)/k} \cdot \left(C_k - ((2k)!)^2\cdot f(n)^{-1/k}\right) - n^{-2}\\
&= \left(1-O_k\left(\frac{\ln n}{n}\right)\right)\cdot n^{-(2k-1)/k} \cdot\left(C_k - O_k(n^{-1/k})\right) - n^{-2}\\
&= C_k\cdot n^{-(2k-1)/k}-O_k(n^{-2}),
\end{align*}
where the second inequality follows from the second part of Lemma \ref{lem-integral-finite} applied to $a=f(n)^{1/k}$ with $\ell=k$ and $m=2k-1$ (then $C_{k,2k-1}$ is simply $C_k$ and $(m-\ell)/(\ell-1)=(k-1)/(k-1)=1$), and where in the last step we used that $k\geq 2$. This gives the desired lower bound in (\ref{eq-term-prob-one-alternative}).

\section{The minimum possible probability of a Condorcet winner}
\label{sect-proof-minimum-probability}

In this section, we prove Theorem \ref{thm-minimum-probability}, which determines the minimum possible probability of having a Condorcet winner when $2k-1$ voters choose independent rankings of a given set of $n$ alternatives according to some probability distribution.

For an integer $k\geq 1$ and $x\in \mathbb{R}$, let us define
\[p_k(x)=\sum_{\ell=0}^{k-1}\binom{2k-1}{\ell}\cdot x^{2k-1-\ell}\cdot (1-x)^{\ell}.\]
Note that for $x\in [0,1]$, the value $p_k(x)$ can be interpreted as follows: Consider a biased coin which shows heads with probability $x$ and tails with probability $1-x$. Then $p_k(x)$ is precisely the probability that, when throwing this coin $2k-1$ times, we have at most $k-1$ tails (indeed, each summand in the sum above is the probability of having exactly $\ell$ tails). Equivalently, $p_k(x)$ is the probability that among $2k-1$ throws we have at least $k$ heads.

Also note that we can express the term in (\ref{eq-term-minimum-probability}) in Theorem \ref{thm-minimum-probability} as
\begin{equation}\label{eq-connect-term-to function}
n^{-(2k-2)}\cdot \sum_{\ell=0}^{k-1}\binom{2k-1}{\ell}(n-1)^\ell = n\cdot \sum_{\ell=0}^{k-1}\binom{2k-1}{\ell}\left(\frac{1}{n}\right)^{2k-1-\ell}\left(\frac{n-1}{n}\right)^{\ell}=n\cdot p_k(1/n).
\end{equation}

In other words, in order to prove Theorem \ref{thm-minimum-probability}, we need to show that for every probability distribution $\pi$ as in the theorem statement, the probability of having a Condorcet winner is at least $n\cdot p_k(1/n)$, and that furthermore the probability  is exactly $n\cdot p_k(1/n)$ for the specific probability distribution $\pi^*$ defined in the theorem statement.

Our proof of Theorem \ref{thm-minimum-probability} will crucially rely on the following property of the function $p_k(x)$: For variables $x_1,\dots,x_n\in [0,1]$ consider the optimization problem of minimizing the sum $p_k(x_1)+\dots+p_k(x_n)$ under the constraint that $x_1+\dots+x_n=1$. When taking $x_1=\dots=x_n=1/n$, we obtain $p_k(x_1)+\dots+p_k(x_n)=n\cdot p_k(1/n)$. The following proposition states that this value $n\cdot p_k(1/n)$ is actually optimal, i.e.\ it is the minimum possible value of $p_k(x_1)+\dots+p_k(x_n)$ for any $x_1,\dots,x_n\in [0,1]$ with $x_1+\dots+x_n=1$. The proof of this optimization statement is given later in Section \ref{subsect-propo-optimization} (the proof is somewhat involved and requires several other lemmas that are stated in proved in Section  \ref{subsect-propo-optimization}).

\begin{proposition}\label{propo-optimization}
For every $k\geq 1$ and any real numbers $x_1,\dots,x_n\in [0,1]$ with $x_1+\dots+x_n=1$ we have $p_k(x_1)+\dots+p_k(x_n)\geq n\cdot p_k(1/n)$.
\end{proposition}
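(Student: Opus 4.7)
The plan is induction on $n$, combining a Lagrange-multiplier analysis in the relative interior of the probability simplex $\{x \in [0,1]^n : x_1 + \dots + x_n = 1\}$ with a monotonicity lemma for the map $m \mapsto m \cdot p_k(1/m)$ to handle the boundary. Three facts about $p_k$ underpin the argument. The standard telescoping identity for binomial tails gives
\[p_k'(x) = (2k-1)\binom{2k-2}{k-1}\bigl(x(1-x)\bigr)^{k-1},\]
so $p_k'$ is symmetric about $1/2$, vanishes only on $\{0,1\}$, is strictly increasing on $[0,1/2]$ and strictly decreasing on $[1/2,1]$; in particular $p_k'$ takes each positive value at most twice. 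Pairing complementary outcomes gives $p_k(x) + p_k(1-x) = 1$, whence $p_k(1/2) = 1/2$. Finally, $p_k''(x) = (2k-1)\binom{2k-2}{k-1}(k-1)\bigl(x(1-x)\bigr)^{k-2}(1-2x) \geq 0$ on $[0,1/2]$.

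The base cases $k = 1$ and $n \in \{1, 2\}$ are immediate. For $n \geq 3$ and $k \geq 2$, the function $F(x_1, \dots, x_n) = \sum_i p_k(x_i)$ attains its minimum on the compact simplex. At any interior critical point, the Lagrange condition $p_k'(x_i) = \lambda$ combined with the ``at most two preimages'' property forces every coordinate to lie in $\{a, 1-a\}$ for some $a \in (0,1)$. If $j$ coordinates equal $a$ and $n - j$ equal $1 - a$, then the constraint $\sum x_i = 1$ becomes $(2j - n)\cdot a = j - n + 1$. A short case analysis on the sign of $2j - n$ shows that for $n \geq 3$ and $0 < j < n$ this forces $a \in \{0, 1\}$, so the only interior critical point is the uniform distribution $(1/n, \dots, 1/n)$, with value $n \cdot p_k(1/n)$. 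If instead the minimum lies on the boundary, some coordinate vanishes and the problem reduces to an $n'$-variable instance with $n' \leq n - 1$, which by induction has value at least $n' \cdot p_k(1/n')$.

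To close the induction I would prove the monotonicity lemma: $h(m) := m \cdot p_k(1/m)$ is non-increasing on integers $m \geq 1$. Setting $t = 1/m$ (so $t \in (0, 1/2]$ for $m \geq 2$), this reduces to showing $p_k(t)/t$ is non-decreasing on $(0, 1/2]$. Writing $\phi(t) := t \cdot p_k'(t) - p_k(t)$, one computes $(p_k(t)/t)' = \phi(t)/t^2$, together with $\phi(0) = 0$ and $\phi'(t) = t \cdot p_k''(t) \geq 0$ on $[0, 1/2]$, so $\phi \geq 0$ there and the lemma follows for $m \geq 2$. The remaining comparison $h(1) \geq h(2)$ is immediate from $p_k(1) = 1 = 2 \cdot p_k(1/2)$. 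Therefore the boundary minima are at least $(n - 1) \cdot p_k(1/(n-1)) \geq n \cdot p_k(1/n)$, matching the value at the unique interior critical point, so the global minimum equals $n \cdot p_k(1/n)$.

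The main obstacle I expect is the interior case analysis: verifying cleanly that $a = (j - n + 1)/(2j - n) \notin (0, 1)$ for all $n \geq 3$ and all integers $j$ with $0 < j < n$. This requires splitting by the sign of $2j - n$ (including the degenerate case $2j = n$, which is inconsistent for $n \geq 3$) and checking the edge values $j \in \{1, n-1\}$, where $a \in \{0, 1\}$ signals degeneration to a lower-dimensional boundary face. Once this case analysis is carried out, the remaining ingredients — the derivative formulas, the symmetry $p_k(x) + p_k(1-x) = 1$, and the monotonicity through $\phi$ — are essentially routine.
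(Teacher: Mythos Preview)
Your argument is correct, but it is organized differently from the paper's. The paper avoids induction on $n$ entirely: at a minimizer it splits into three cases—(i) all $x_i\in[0,1/2]$, where Jensen's inequality on the convex restriction of $p_k$ gives the bound directly; (ii) some $x_i=1$, where the value equals $1$; (iii) some $x_n\in(1/2,1)$, where a \emph{two-variable} first-order perturbation (moving mass between $x_n$ and a positive $x_1$) forces $x_1(1-x_1)=x_n(1-x_n)$, hence $x_1+x_n=1$ and all other coordinates vanish, again giving value $1$. Cases (ii) and (iii) are then compared to $n\,p_k(1/n)$ using only the inequality $n\,p_k(1/n)\le 1$, which the paper derives from convexity via Jensen between $0$ and $1/2$.

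Your route replaces the two-variable perturbation by a full Lagrange analysis and replaces the single inequality $n\,p_k(1/n)\le 1$ by the stronger monotonicity of $m\mapsto m\,p_k(1/m)$, which you then use to chain the inductive boundary reduction. This is more systematic and would generalize more readily, but it costs extra work: you must carry out the $j$-case analysis (which, as you note, is routine once the signs of $2j-n$ and $j-n+1$ are tracked) and prove the monotonicity lemma. The paper's argument is shorter because the two-variable trick already pins down the non-convex case completely, so no induction or monotonicity is needed beyond the single comparison with $1$.
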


We are now ready to prove Theorem \ref{thm-minimum-probability}. We will separately prove the two parts of the theorem (the first part asserting a bound for the probability of having a Condorcet winner for any probability distribution $\pi$, and the second part asserting equality for the specific probability distribution $\pi^*$).

Using (\ref{eq-connect-term-to function}), we can restate the first part of  Theorem \ref{thm-minimum-probability} as follows.

\begin{proposition}\label{prop-minimum-probability-1}
Let $n\geq 1$, let $\S=\{A_1,\dots,A_n\}$ be a set of $n$ alternatives, and let $\pi$ be a probability distribution  on the set $P_\S$ of all rankings of $\S$. Then for any $k\geq 1$, the probability that there is a Condorcet winner when $2k-1$ voters independently choose rankings $\sigma_1,\dots,\sigma_{2k-1}\in P_\S$ according to the probability distribution $\pi$ is at least $n\cdot p_k(1/n)$.
\end{proposition}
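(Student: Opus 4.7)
The plan is to reduce the proposition to Proposition~\ref{propo-optimization} via a simple sufficient condition: namely, if at least $k$ of the $2k-1$ voters rank $A_j$ first, then $A_j$ is automatically a Condorcet winner. Let $p_j$ denote the probability that a single voter drawn from $\pi$ ranks $A_j$ first; then $p_1 + \dots + p_n = 1$, because every ranking has a unique top element.

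First I would observe that, since at most one Condorcet winner exists, the events $\{A_j \text{ is a Condorcet winner}\}$ for $j=1,\dots,n$ are pairwise disjoint, so
\[
\PP(\text{Condorcet winner exists}) = \sum_{j=1}^n \PP(A_j \text{ is a Condorcet winner}).
\]
Next, for each $j$, let $T_j$ denote the number of voters (out of $2k-1$) who rank $A_j$ first; since voters choose their rankings independently from $\pi$, $T_j$ is a sum of $2k-1$ independent Bernoulli$(p_j)$ random variables, so
\[
\PP(T_j \geq k) = p_k(p_j)
\]
by the definition of $p_k$ as the probability of at least $k$ heads in $2k-1$ biased coin flips. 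Crucially, if $T_j \geq k$ then $A_j$ beats every other alternative $A_{j'}$ in pairwise majority voting (the $T_j$ voters who rank $A_j$ first certainly prefer $A_j$ to $A_{j'}$), so $\{T_j \geq k\} \subseteq \{A_j \text{ is a Condorcet winner}\}$, giving $\PP(A_j \text{ is a Condorcet winner}) \geq p_k(p_j)$.

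Combining these two displays yields
\[
\PP(\text{Condorcet winner exists}) \;\geq\; \sum_{j=1}^n p_k(p_j),
\]
and since $p_1,\dots,p_n \in [0,1]$ with $p_1 + \dots + p_n = 1$, Proposition~\ref{propo-optimization} immediately gives $\sum_{j=1}^n p_k(p_j) \geq n \cdot p_k(1/n)$, which is the desired bound.

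There is essentially no obstacle in this argument beyond invoking Proposition~\ref{propo-optimization}: the disjointness of the Condorcet winner events, the binomial identification $\PP(T_j \geq k) = p_k(p_j)$, and the implication ``top-ranked by a majority implies Condorcet winner'' are all immediate. The real content (namely the optimization inequality $\sum p_k(p_j) \geq n \cdot p_k(1/n)$ under the simplex constraint) is precisely what Proposition~\ref{propo-optimization} handles, and that is where the technical work of the section is concentrated. I also note in passing that this proof is tight for $\pi^*$: there each $p_j = 1/n$ so the bound becomes an equality on the optimization side, and for the cyclic distribution one can check that $A_j$ is in fact a Condorcet winner \emph{iff} $T_j \geq k$, so the sufficient condition used above is also necessary.
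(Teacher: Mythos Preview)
Your proof is correct and follows essentially the same approach as the paper: define the top-rank probabilities $p_j$ (the paper's $x_i$), use the implication ``majority top-ranked $\Rightarrow$ Condorcet winner'' together with disjointness of the winner events to bound the probability below by $\sum_j p_k(p_j)$, and then invoke Proposition~\ref{propo-optimization}. Your closing remark on why equality holds for $\pi^*$ is an extra observation not in this proof but matches exactly the content of the paper's separate Proposition~\ref{prop-minimum-probability-2}.
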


\begin{proof}
For each $i=1,\dots,n$, let us define $x_i$ to be the probability that a random ranking of $\mathcal{S}$ chosen according to the probability distribution $\pi$ has $A_i$ as its top-ranked alternative. Then we have $x_1,\dots,x_n\in [0,1]$ and $x_1+\dots+x_n=1$. Hence, by Proposition \ref{propo-optimization} we have $p_k(x_1)+\dots+p_k(x_n)\geq n\cdot p_k(1/n)$.

For any outcomes of the rankings $\sigma_1,\dots,\sigma_{2k-1}\in P_\S$ there is automatically a Condorcet winner if for some $i=1,\dots,n$ at least $k$ of the $2k-1$ rankings $\sigma_1,\dots,\sigma_{2k-1}$ have alternative $A_i$ as their top-ranked alternative (since $A_i$ is automatically a Condorcet winner in this case). We claim that for each $i=1,\dots,n$, this happens with probability exactly $p_k(x_i)$.

Indeed, fix some $i\in \{1,\dots,n\}$. For each of the $2k-1$ random rankings $\sigma_1,\dots,\sigma_{2k-1}$, the probability of $A_i$ being the top-ranked alternative is precisely $x_i$ (by definition of $x_i$). Hence the probability that at least $k$ of the $2k-1$ rankings $\sigma_1,\dots,\sigma_{2k-1}$ have alternative $A_i$ as their top-ranked alternative is the same as the probability that a biased coin showing heads with probability $x_i$ turns up heads at least $k$ times among $2k-1$ throws. This probability is precisely $p_k(x_i)$.

Thus, for each $i=1,\dots,n$ it happens with probability $p_k(x_i)$ that at least $k$ of the $2k-1$ rankings $\sigma_1,\dots,\sigma_{2k-1}$ have alternative $A_i$ as their top-ranked alternative. Furthermore, whenever this happens, alternative $A_i$ is a automatically a Condorcet winner.

Hence, for each $i=1,\dots,n$, alternative $A_i$ is a Condorcet winner with probability at least $p_k(x_i)$. As there is always at most one Condorcet winner, the total probability of having a Condorcet winner is therefore at least
\[p_k(x_1)+\dots+p_k(x_n)\geq n\cdot p_k(1/n),\]
as desired.
\end{proof}

It remains to prove the second part of  Theorem \ref{thm-minimum-probability}. Again using (\ref{eq-connect-term-to function}), we can restate this remaining part as follows.

\begin{proposition}\label{prop-minimum-probability-2}
Let $n\geq 1$, and let $\S=\{A_1,\dots,A_n\}$ be a set of $n$ alternatives. Define $\pi^*$ to be the probability distribution on $P_S$ given by taking each of the rankings $(A_i, A_{i+1},....,A_n,A_1,\dots,A_{i-1})$ for $i=1,\dots,n$ with probability $1/n$, and all other rankings with probability $0$. Then for any $k\geq 1$, the probability that there is a Condorcet winner when $2k-1$ voters independently choose rankings $\sigma_1,\dots,\sigma_{2k-1}\in P_\S$ according to the probability distribution $\pi^*$ is equal to $n\cdot p_k(1/n)$.
\end{proposition}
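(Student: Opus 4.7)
The plan is to show that under $\pi^*$, for each $j\in\{1,\dots,n\}$ the alternative $A_j$ is a Condorcet winner if and only if at least $k$ of the $2k-1$ voters drew the ranking $\rho_j := (A_j,A_{j+1},\dots,A_{j-1})$ (with indices taken cyclically). Once this equivalence is in hand, the rest is immediate: among $2k-1$ voters, at most one value of $j$ can have $\rho_j$ chosen by $k$ or more voters, so the $n$ events ``$A_j$ is the Condorcet winner'' are pairwise disjoint; each has probability $p_k(1/n)$ by the binomial interpretation of $p_k$ recorded in Section \ref{sect-proof-minimum-probability} (since a voter independently picks $\rho_j$ with probability $1/n$); and summing over $j$ gives the desired total $n\cdot p_k(1/n)$.

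The ``if'' direction of the equivalence is the same observation already used in the proof of Proposition \ref{prop-minimum-probability-1}: if at least $k$ voters have $A_j$ at the top of their ranking, then $A_j$ beats every other alternative in majority voting and is automatically a Condorcet winner.

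The ``only if'' direction is the key step, and I will prove it by examining the single pairwise comparison between $A_j$ and its cyclic predecessor, $A_{j-1}$ (with $A_0 := A_n$). Using the fact that the position of $A_m$ in $\rho_i$ equals $((m-i)\bmod n)+1$, a short case analysis shows that $A_j$ precedes $A_{j-1}$ in $\rho_i$ iff $(j-i)\bmod n < (j-1-i)\bmod n$, and this in turn holds iff $(j-i)\bmod n = 0$, i.e., iff $i=j$ (when $(j-i)\bmod n = 0$ the predecessor maps to $n-1$, and otherwise it strictly decreases by $1$). Hence $\rho_j$ is the unique ranking in the support of $\pi^*$ in which $A_j$ beats $A_{j-1}$, so in order for $A_j$ to win the pairwise majority vote against $A_{j-1}$ at least $k$ voters must have drawn $\rho_j$, which forces $A_j$ to be the top choice of at least $k$ voters. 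This short cyclic comparison is the heart of the argument and is really the only non-routine point; everything else reduces to the binomial count and the disjointness outlined in the first paragraph.
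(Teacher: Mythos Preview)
Your proposal is correct and follows essentially the same approach as the paper: both prove the equivalence ``$A_j$ is a Condorcet winner iff at least $k$ voters chose $\rho_j$'' by using the top-choice observation for the ``if'' direction and the single pairwise comparison against the cyclic predecessor $A_{j-1}$ for the ``only if'' direction, then conclude via the binomial interpretation of $p_k(1/n)$ and disjointness. The only difference is cosmetic---you make the predecessor comparison explicit via modular arithmetic, whereas the paper states it verbally.
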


\begin{proof}
We claim that for any outcomes of the rankings $\sigma_1,\dots,\sigma_{2k-1}$ chosen according to the probability distribution $\pi^*$, alternative $A_j$ is a Condorcet winner if and only if at least $k$ of the $2k-1$ rankings $\sigma_1,\dots,\sigma_{2k-1}$ are equal to $(A_j, A_{i+1},....,A_n,A_1,\dots,A_{j-1})$. Indeed, if at least $k$ of the $2k-1$ rankings are $(A_j, A_{j+1},....,A_n,A_1,\dots,A_{j-1})$, then alternative $A_j$ is the first-ranked alternative for at least $k$ of the $2k-1$ voters and so $A_j$ must be a Condorcet winner. On the other hand, if some alternative $A_j$ is a Condorcet winner for some outcome of the rankings $\sigma_1,\dots,\sigma_{2k-1}$, then at least $k$ of the $2k-1$ voters must rank alternative $A_j$ higher than alternative $A_{j-1}$. However, the only ranking of the form $(A_i, A_{i+1},....,A_n,A_1,\dots,A_{i-1})$ for $i=1,\dots,n$ where alternative $A_j$ is ranked higher than alternative $A_{j-1}$ is the ranking $(A_j, A_{j+1},....,A_n,A_1,\dots,A_{j-1})$. Hence at least $k$ of the $2k-1$ rankings $\sigma_1,\dots,\sigma_{2k-1}$ must be the ranking $(A_j, A_{j+1},....,A_n,A_1,\dots,A_{j-1})$.

We have shown that for $\sigma_1,\dots,\sigma_{2k-1}$ chosen according to the probability distribution $\pi^*$, any alternative $A_j$ is a Condorcet winner if and only if at least $k$ of the $2k-1$ rankings $\sigma_1,\dots,\sigma_{2k-1}$ are equal to $(A_j, A_{i+1},....,A_n,A_1,\dots,A_{j-1})$. For each $j=1,\dots,n$, we claim that the probability that this happens is precisely $p_k(1/n)$. Recall that each of the rankings $\sigma_1,\dots,\sigma_{2k-1}$ equals $(A_j, A_{j+1},....,A_n,A_1,\dots,A_{j-1})$ with probability $1/n$. Hence the probability that at least $k$ of the the $2k-1$ rankings $\sigma_1,\dots,\sigma_{2k-1}$ are $(A_j, A_{j+1},....,A_n,A_1,\dots,A_{j-1})$ is the same as the probability of having at least $k$ heads among $2k-1$ throws of a biased coin that shows heads with probability $1/n$. This latter probability is precisely $p_k(1/n)$.

So we have shown that for each $j=1,\dots,n$, alternative $A_j$ is a Condorcet winner with probability exactly $p_k(1/n)$. Since there is always at most one Condorcet winner, we can conclude that the probability of having a Condorcet winner equals $n\cdot p_k(1/n)$.
\end{proof}

\section{Proofs of technical lemmas}
\label{sect-lemmas}

It remains to prove Lemmas \ref{lem-integral-finite}, \ref{lemma-bounds-Q} and \ref{lem-taylor}, as well as Proposition \ref{propo-optimization}. We will prove the lemmas \ref{lemma-bounds-Q} and \ref{lem-taylor} in the first subsection, and we will give the (more complicated) proof of Lemma \ref{lem-integral-finite} in the second subsection. The proof of Proposition \ref{propo-optimization} can be found in the last subsection.

\subsection{Proof of Lemmas \ref{lemma-bounds-Q} and \ref{lem-taylor}}

\begin{proof}[Proof of Lemma \ref{lemma-bounds-Q}]
Recall that we defined $Q(x_1,\dots,x_{2k-1})$ to be the probability that independent uniformly random variables $y_1,\dots,y_{2k-1}\in [0,1]$ satisfy the condition that $x_i\leq y_i$ for at most $k-1$ indices $i\in [2k-1]$. Note that this condition is equivalent to saying that $y_i<x_i$ for at least $k$ indices $i\in [2k-1]$. Hence we can equivalently define $Q(x_1,\dots,x_{2k-1})$ to be the probability that independent uniformly random variables $y_1,\dots,y_{2k-1}\in [0,1]$ satisfy the condition that $y_i<x_i$ for at least $k$ indices $i\in [2k-1]$.

For a subset $I\su [2k-1]$ of size $|I|=k$, let us define $\mathcal{E}_I$ to be the event that we have $y_i<x_i$ for all $i\in I$. Then $Q(x_1,\dots,x_{2k-1})$ is the probability that at least one of the events $\mathcal{E}_I$ for some subset $I\su [2k-1]$ of size $|I|=k$ holds. In other words
\[Q(x_1,\dots,x_{2k-1})=\PP\left[\bigcup_{I}\mathcal{E}_I \right],\]
where the union is taken over all subsets $I\su [2k-1]$ of size $|I|=k$

For each such subset $I$ we have $\PP[\mathcal{E}_I]=\prod_{i\in I}x_i$, since for each $i\in I$ we have $y_i<x_i$ with probability $x_i$ and this happens independently for all $i\in I$. Thus, by the union bound we obtain
\[Q(x_1,\dots,x_{2k-1})=\PP\left[\bigcup_{I}\mathcal{E}_I \right]\leq \sum_{\substack{I\su [2k-1]\\ |I|=k}}\PP[\mathcal{E}_I]=\sum_{\substack{I\su [2k-1]\\ |I|=k}}\,\prod_{i\in I}x_i=\sigma_k(x_1,\dots,x_{2k-1}).\]
This proves the upper bound in the first part of Lemma \ref{lemma-bounds-Q}.

For the lower bound, note that $\sigma_k(x_1,\dots,x_{2k-1})=\sum_{I\su [2k-1],\, |I|=k}\prod_{i\in I}x_i$ is a sum of $\binom{2k-1}{k}\leq 2^{2k-1}$ summands. Hence at least one of these summands must be at least $2^{-2k+1}\cdot \sigma_k(x_1,\dots,x_{2k-1})$. In other words, there exists a subset $J\su [2k-1]$ of size $|J|=k$ such that $\prod_{i\in J}x_i\geq 2^{-2k+1}\cdot \sigma_k(x_1,\dots,x_{2k-1})$. Hence
\[Q(x_1,\dots,x_{2k-1})=\PP\left[\bigcup_{I}\mathcal{E}_I \right]\geq \PP[\mathcal{E}_J]=\prod_{i\in J}x_i\geq 2^{-2k+1}\cdot \sigma_k(x_1,\dots,x_{2k-1}).\]
This finishes the proof of the first part of the lemma.

For the second part of the lemma, we use that by Bonferroni's inequalities we have
\[Q(x_1,\dots,x_{2k-1})=\PP\left[\bigcup_{I}\mathcal{E}_I \right]\geq \sum_I \PP[\mathcal{E}_I]- \sum_{I, I'} \PP[\mathcal{E}_I\cap \mathcal{E}_{I'}]=\sigma_k(x_1,\dots,x_{2k-1})-\sum_{I, I'} \PP[\mathcal{E}_I\cap \mathcal{E}_{I'}],\]
where the last sum is over all choices of two distinct subsets $I,I'\su [2k-1]$ with $|I|=|I'|=k$. Note that for any choice of two such subsets, the event $\mathcal{E}_I\cap \mathcal{E}_{I'}$ happens if and only if $y_i<x_i$ for all $i\in I\cup I'$ and the probability for this to occur is precisely $\prod_{i\in I\cup I'}x_i$ (here, we again used that the different variables $y_i$ are independent). Thus,
\[Q(x_1,\dots,x_{2k-1})\geq \sigma_k(x_1,\dots,x_{2k-1})-\sum_{I, I'}\prod_{i\in I\cup I'}x_i,\]
where the sum is again over all choices of two distinct subsets $I,I'\su [2k-1]$ with $|I|=|I'|=k$. Note that for any two such subsets we have $|I\cup I'|\geq k+1$. Hence for any two such $I,I'$ we can choose a subset $J(I,I')\su [2k-1]$ of size $|J(I,I')|=k+1$ with $J(I,I')\su I\cup I'$. Then, as $x_i\in [0,1]$ for all $i$, we have
\[Q(x_1,\dots,x_{2k-1})\geq \sigma_k(x_1,\dots,x_{2k-1})-\sum_{I, I'}\prod_{i\in I\cup I'}x_i\geq \sigma_k(x_1,\dots,x_{2k-1})-\sum_{I, I'}\prod_{i\in J(I,I')}x_i.\]
Note that the sum on the right-hand side has less than $\binom{2k-1}{k}^2\leq 2^{4k-2}$ summands. Therefore each $J\su [2k-1]$ with $|J|=k+1$ occurs as $J(I,I')$ at most $2^{4k-2}$ times and we obtain
\[Q(x_1,\dots,x_{2k-1})\geq \sigma_k(x_1,\dots,x_{2k-1})-2^{4k-2}\cdot \sum_{\substack{J\su [2k-1]\\ |J|=k+1}}\,\prod_{i\in J}x_i=\sigma_k(x_1,\dots,x_{2k-1})-2^{4k-2}\cdot \sigma_{k+1}(x_1,\dots,x_{2k-1}).\]
Finally, by Maclaurin's inequality for elementary symmetric polynomials we have
\[\frac{\sigma_{k+1}(x_1,\dots,x_{2k-1})}{\binom{2k-1}{k+1}}\leq \left(\frac{\sigma_{k}(x_1,\dots,x_{2k-1})}{\binom{2k-1}{k}}\right)^{(k+1)/k}\]
and therefore, as $\binom{2k-1}{k+1}\leq \binom{2k-1}{k}$, we obtain $\sigma_{k+1}(x_1,\dots,x_{2k-1})\leq \sigma_{k}(x_1,\dots,x_{2k-1})^{(k+1)/k}$. Thus,
\[Q(x_1,\dots,x_{2k-1})\geq  \sigma_k(x_1,\dots,x_{2k-1})-2^{4k-2}\cdot\sigma_{k}(x_1,\dots,x_{2k-1})^{(k+1)/k},\]
as desired.
\end{proof}

\begin{proof}[Proof of Lemma \ref{lem-taylor}] For every $y<0$, by Taylor's theorem (with Lagrange remainder term) there is some $\xi$ in the interval $[y,0]$ such that
\[e^{y}=1+y+\frac{e^\xi}{2}\cdot y^2.\]
Using $0\leq e^\xi\leq 1$, we can conclude that
\[1+y\leq e^{y}\leq 1+y+\frac{1}{2}\cdot y^2\]
for all $y<0$. Now, let $0\leq t\leq 1/3$. Setting $y=-t$, we obtain $e^{-t}\geq 1+(-t)=1-t$, establishing the second inequality in Lemma \ref{lem-taylor}. For the first inequality, taking $y=-t-t^2$ gives
\[e^{-t-t^2}\leq 1-t-t^2+\frac{1}{2}\cdot (t+t^2)^2=1-t-\frac{1}{2}\cdot t^2+t^3+\frac{1}{2}\cdot t^4\le 1-t-\frac{1}{2}\cdot t^2+\frac{3}{2}\cdot t^3\leq 1-t,\]
as desired.
\end{proof}

\subsection{Proof of Lemma \ref{lem-integral-finite}}

We will use the following easy lemma in the proof of Lemma \ref{lem-integral-finite}.

\begin{lemma}\label{lemma-preparation-integral-finite}
For any positive integers $\ell<m$, and any $a\geq 1$, we have
\begin{multline*}
\int_0^\infty \dots \int_0^\infty \exp(-\sigma_\ell(x_1,\dots,x_{m}))\diff x_1 \dots \diff x_{m}\leq\\
\int_0^a \dots \int_0^a \exp(-\sigma_\ell(x_1,\dots,x_{m}))\diff x_1 \dots \diff x_{m}+m\cdot \int_a^\infty \int_0^\infty \dots \int_0^\infty \exp(-\sigma_\ell(x_1,\dots,x_{m}))\diff x_1 \dots \diff x_{m}
\end{multline*}
\end{lemma}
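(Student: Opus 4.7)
The plan is to decompose the domain $[0,\infty)^m$ into the ``bounded'' cube $[0,a]^m$ and its complement in $[0,\infty)^m$, and then cover that complement by $m$ symmetric pieces, using a union-bound style inequality (valid because the integrand is non-negative).

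More precisely, I would first write
\[\int_{[0,\infty)^m}\exp(-\sigma_\ell)=\int_{[0,a]^m}\exp(-\sigma_\ell)+\int_{[0,\infty)^m\sm[0,a]^m}\exp(-\sigma_\ell).\]
Then, for each $i=1,\dots,m$, I would define the region
\[R_i=\{(x_1,\dots,x_m)\in[0,\infty)^m:x_i\geq a\},\]
and observe that $[0,\infty)^m\sm[0,a]^m=R_1\cup\dots\cup R_m$ (a point lies outside $[0,a]^m$ precisely when at least one of its coordinates exceeds $a$). Since $\exp(-\sigma_\ell(x_1,\dots,x_m))\geq 0$ everywhere, a union bound yields
\[\int_{[0,\infty)^m\sm[0,a]^m}\exp(-\sigma_\ell)\leq\sum_{i=1}^m\int_{R_i}\exp(-\sigma_\ell).\]

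Finally, I would exploit the symmetry of $\sigma_\ell$ under permutations of its variables: the map swapping $x_1$ and $x_i$ transforms $R_i$ onto $R_1$ and leaves $\exp(-\sigma_\ell(x_1,\dots,x_m))$ invariant, so each of the $m$ integrals in the sum equals
\[\int_{R_1}\exp(-\sigma_\ell)=\int_a^\infty\int_0^\infty\dots\int_0^\infty\exp(-\sigma_\ell(x_1,\dots,x_m))\diff x_1\dots\diff x_m,\]
which is precisely the integral appearing on the right-hand side of the lemma. Adding the $[0,a]^m$ contribution back then gives the desired inequality.

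There is no real obstacle here; the only things to be mildly careful about are (i) non-negativity of the integrand, which justifies both the splitting of the domain and the union-bound inequality regardless of whether any of the integrals turn out to be infinite, and (ii) writing out the change of variables precisely enough that the symmetry of $\sigma_\ell$ is clearly invoked. The hypothesis $a\geq 1$ is not actually used in this argument, and indeed the lemma holds for any $a\geq 0$, but stating it for $a\geq 1$ suffices for its later application inside the proof of Lemma~\ref{lem-integral-finite}.
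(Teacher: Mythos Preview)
Your proposal is correct and follows essentially the same approach as the paper: define the regions $R_i=\{x_i\geq a\}$ (the paper calls them $D_i^{(a)}$), cover $[0,\infty)^m\setminus[0,a]^m$ by their union, apply a union bound using non-negativity of the integrand, and then use the permutation symmetry of $\sigma_\ell$ to identify all $m$ integrals with a single one. Your remark that the hypothesis $a\geq 1$ is not needed for this lemma is also correct.
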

\begin{proof}
For any $a\geq 1$, and any $i=1,\dots, m$, let us define the domain
\[D_i^{(a)}=\{(x_1,\dots,x_{m})\in [0,\infty)^{m} \mid  x_i\geq a\}.\]
It is not hard to see that $[0,\infty)^{m}$ is covered by the union of $[0,a]^{m}$ and the sets $D_i^{(a)}$ for $i=1,\dots,m$. Thus, for any $a\geq 1$ we can conclude
\begin{align*}
&\int_0^\infty \dots \int_0^\infty \exp(-\sigma_\ell(x_1,\dots,x_{m}))\diff x_1 \dots \diff x_{m}\\
&\quad \leq \int_0^a \dots \int_0^a \exp(-\sigma_\ell(x_1,\dots,x_{m}))\diff x_1 \dots \diff x_{m}+\sum_{i=1}^{m} \int_{D_i^{(a)}} \exp(-\sigma_\ell(x_1,\dots,x_{m})) \diff^{m} (x_1,\dots,x_{m})\\
&\quad= \int_0^a \dots \int_0^a \exp(-\sigma_\ell(x_1,\dots,x_{m}))\diff x_1 \dots \diff x_{m}+m\cdot \int_{D_m^{(a)}} \exp(-\sigma_\ell(x_1,\dots,x_{m})) \diff^{m} (x_1,\dots,x_{m})\\
&\quad=\int_0^a \dots \int_0^a \exp(-\sigma_\ell(x_1,\dots,x_{m}))\diff x_1 \dots \diff x_{m}+m\cdot \int_a^\infty \int_0^\infty \dots \int_0^\infty \exp(-\sigma_\ell(x_1,\dots,x_{m}))\diff x_1 \dots \diff x_{m},
\end{align*}
where in the second step we used that by symmetry of the function $\exp(-\sigma_\ell(x_1,\dots,x_m))$ its integral has the same value on each of the domains $D_i^{(a)}$ for $i=1,\dots,m$.
\end{proof}

We can now prove Lemma \ref{lem-integral-finite} by induction on $\ell$.

\begin{proof}[Proof of Lemma \ref{lem-integral-finite}]
First, consider the case $\ell=1$. Then
\begin{multline*}
\int_0^\infty \dots \int_0^\infty \exp(-\sigma_1(x_1,\dots,x_{m}))\diff x_1 \dots \diff x_{m}=\int_0^\infty \dots \int_0^\infty \exp(-x_1-\dots-x_m)\diff x_1 \dots \diff x_{m}\\
=\left(\int_0^\infty e^{-x_1}\diff x_1\right)\dotsm \left(\int_0^\infty e^{-x_m}\diff x_1\right)=\left(\int_0^\infty e^{-x}\diff x\right)^m=1^m=1\leq  (m!)^2,
\end{multline*}
so Lemma \ref{lem-integral-finite} holds for $\ell=1$ (note that the second part of the lemma statement is only for the case of $\ell\geq 2$).

Now, let us assume that $\ell\geq 2$ and that we already proved Lemma \ref{lem-integral-finite} for $\ell-1$. We claim that in order to prove Lemma  \ref{lem-integral-finite} for $\ell$, it suffices to show that
\begin{equation}\label{eq-sufficient-induction-integral}
\int_a^\infty \int_0^\infty \dots \int_0^\infty \exp(-\sigma_\ell(x_1,\dots,x_{m}))\diff x_1 \dots \diff x_{m}\leq (m-1)\cdot ((m-1)!)^2 \cdot a^{-(m-\ell)/(\ell-1)}
\end{equation}
for any integer $m>\ell$ and any $a\geq 1$. Indeed, the inequality in the second part of Lemma  \ref{lem-integral-finite} follows directly by combining (\ref{eq-sufficient-induction-integral}) with Lemma \ref{lemma-preparation-integral-finite}. Furthermore, combining (\ref{eq-sufficient-induction-integral}) with Lemma \ref{lemma-preparation-integral-finite} in the special case of $a=1$ gives
\begin{align*}
&\int_0^\infty \dots \int_0^\infty \exp(-\sigma_\ell(x_1,\dots,x_{m}))\diff x_1 \dots \diff x_{m}\\
&\quad\leq \int_0^1 \dots \int_0^1 \exp(-\sigma_\ell(x_1,\dots,x_{m}))\diff x_1 \dots \diff x_{m}+m\cdot \int_1^\infty \int_0^\infty \dots \int_0^\infty \exp(-\sigma_\ell(x_1,\dots,x_{m}))\diff x_1 \dots \diff x_{m}\\
&\quad\leq \int_0^1 \dots \int_0^1 1\diff x_1 \dots \diff x_{m}+m\cdot (m-1)\cdot ((m-1)!)^2 \cdot 1^{-(m-\ell)/(\ell-1)}\\
&\quad =1+m\cdot (m-1)\cdot ((m-1)!)^2\leq m\cdot ((m-1)!)^2+m\cdot (m-1)\cdot ((m-1)!)^2= (m!)^2,
\end{align*}
which proves the first part of Lemma \ref{lem-integral-finite} for $\ell$. So it only remains to show (\ref{eq-sufficient-induction-integral}). Note that for any non-negative $x_1,\dots,x_m$ we have $\sigma_\ell(x_1,\dots,x_{m})\geq\sigma_{\ell-1}(x_1,\dots,x_{m-1})\cdot x_m$ (indeed, the right-hand side consists of precisely those terms of  $\sigma_\ell(x_1,\dots,x_{m})$ that contain $x_m$). Hence
\begin{align*}
&\int_a^\infty \int_0^\infty \dots \int_0^\infty \exp(-\sigma_\ell(x_1,\dots,x_{m}))\diff x_1 \dots \diff x_{m}\\
&\quad=\int_a^\infty \left(\int_0^\infty \dots \int_0^\infty \exp(-\sigma_\ell(x_1,\dots,x_{m}))\diff x_1 \dots \diff x_{m-1}\right)\diff x_{m}\\
&\quad\leq \int_a^\infty \left(\int_0^\infty \dots \int_0^\infty \exp(-\sigma_{\ell-1}(x_1,\dots,x_{m-1})\cdot x_m)\diff x_1 \dots \diff x_{m-1}\right)\diff x_{m}\\
&\quad= \int_a^\infty \left(x_m^{-(m-1)/(\ell-1)}\cdot \int_0^\infty \dots \int_0^\infty \exp(-\sigma_{\ell-1}(z_1,\dots,z_{m-1}))\diff z_1 \dots \diff z_{m-1}\right)\diff x_{m}\\
&\quad\leq \int_a^\infty x_m^{-\frac{m-1}{\ell-1}}\cdot ((m-1)!)^2\diff x_{m}=((m-1)!)^2\cdot \frac{\ell-1}{m-\ell}\cdot a^{-\frac{m-\ell}{\ell-1}} \leq (\ell-1)\cdot ((m-1)!)^2 \cdot a^{-\frac{m-\ell}{\ell-1}},
\end{align*}
where in the third step we considered the substitution $z_i=x_m^{1/(\ell-1)}x_i$ for $i=1,\dots,m-1$ (and used that $\sigma_{\ell-1}(x_1,\dots,x_{m-1})$ is a homogeneous polynomial of degree $\ell-1$), and in the fourth step we used the induction hypothesis for $\ell-1$ (noting that $m-1>\ell-1$). Since $\ell<m$, this in particular proves (\ref{eq-sufficient-induction-integral}).
\end{proof}

\subsection{Proof of Proposition \ref{propo-optimization}}
\label{subsect-propo-optimization}

We start by proving some lemmas about properties of the function $p_k(x)$, which will be used in our proof of Proposition \ref{propo-optimization}.

\begin{lemma}\label{lemma-function-anti-symmetry}
For every $k\geq 1$ and $x\in [0,1]$ we have $p_k(x)+p_k(1-x)=1$.
\end{lemma}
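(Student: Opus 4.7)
The plan is to use the probabilistic interpretation of $p_k$ that was already set up in the introduction to Section \ref{sect-proof-minimum-probability}. Recall that $p_k(x)$ is precisely the probability of getting at least $k$ heads in $2k-1$ independent tosses of a biased coin that shows heads with probability $x$.

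First I would consider, for a fixed $x \in [0,1]$, a coin $C$ with heads probability $x$, and a coin $C'$ with heads probability $1-x$ (equivalently, $C'$ is $C$ with the heads/tails labels swapped). Then $p_k(x)$ equals the probability that tossing $C$ yields at least $k$ heads in $2k-1$ tosses, while $p_k(1-x)$ equals the probability that tossing $C'$ yields at least $k$ heads, which is the same as the probability that tossing $C$ yields at least $k$ tails, i.e.\ at most $k-1$ heads. Since ``at least $k$ heads'' and ``at most $k-1$ heads'' are complementary events on $2k-1$ tosses, their probabilities sum to $1$, which gives the identity.

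If a purely algebraic derivation is preferred, one can instead apply the binomial theorem: expand $1 = (x + (1-x))^{2k-1} = \sum_{\ell=0}^{2k-1}\binom{2k-1}{\ell} x^{2k-1-\ell}(1-x)^\ell$, then split the sum at $\ell = k$. The summands with $\ell \leq k-1$ assemble to $p_k(x)$ by definition, and in the summands with $\ell \geq k$ the substitution $\ell' = 2k-1-\ell$ together with $\binom{2k-1}{\ell} = \binom{2k-1}{\ell'}$ reindexes them to $\sum_{\ell'=0}^{k-1}\binom{2k-1}{\ell'} x^{\ell'}(1-x)^{2k-1-\ell'} = p_k(1-x)$.

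There is no real obstacle here; this is essentially just the observation that the binomial distribution on $2k-1$ trials splits evenly at the threshold $k$, with no ``middle'' term (since $2k-1$ is odd). I expect the probabilistic phrasing to be the cleanest and most in line with the rest of the section.
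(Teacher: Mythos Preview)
Your proposal is correct. The algebraic alternative you sketch is exactly the paper's proof: the paper writes $p_k(x)+p_k(1-x)$, reindexes the second sum via $\ell\mapsto 2k-1-\ell$, and collapses everything with the binomial theorem to $(x+(1-x))^{2k-1}=1$.

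Your primary (probabilistic) argument is a genuinely different route. It exploits the coin-flip interpretation of $p_k$ introduced just before the lemma: swapping heads and tails turns the event ``at least $k$ heads in $2k-1$ tosses'' into its complement ``at most $k-1$ heads'', with no middle term because $2k-1$ is odd. This is shorter and more conceptual, and it avoids any index manipulation; the paper's algebraic computation, on the other hand, is self-contained and does not rely on the reader accepting the probabilistic identification of $p_k(x)$ with a binomial tail (which the paper only stated informally for $x\in[0,1]$). Either is perfectly adequate here.
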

\begin{proof}
We have
\begin{align*}
p_k(x)+p_k(1-x)&=\sum_{\ell=0}^{k-1}\binom{2k-1}{\ell}\cdot x^{2k-1-\ell}\cdot (1-x)^{\ell}+\sum_{\ell=0}^{k-1}\binom{2k-1}{\ell}\cdot (1-x)^{2k-1-\ell}\cdot x^{\ell}\\
&=\sum_{\ell=0}^{k-1}\binom{2k-1}{\ell}\cdot x^{2k-1-\ell}\cdot (1-x)^{\ell}+\sum_{\ell=k}^{2k-1}\binom{2k-1}{2k-1-\ell}\cdot (1-x)^{\ell}\cdot x^{2k-1-\ell}\\
&=\sum_{\ell=0}^{2k-1}\binom{2k-1}{\ell}\cdot x^{2k-1-\ell}\cdot (1-x)^{\ell}\\
&=\left(x+(1-x)\right)^{2k-1}=1,
\end{align*}
where in the second-last step we used the binomial theorem.
\end{proof}

\begin{lemma}\label{lemma-function-derivative}
For $k\geq 1$ and $x\in [0,1]$, the derivative of the function $p_k(x)$ is
\[p_k'(x)=\frac{(2k-1)!}{(k-1)!^2}\cdot x^{k-1}(1-x)^{k-1}.\]
\end{lemma}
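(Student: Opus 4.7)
The plan is to differentiate $p_k(x) = \sum_{\ell=0}^{k-1} \binom{2k-1}{\ell} x^{2k-1-\ell}(1-x)^\ell$ term by term and exhibit a telescoping cancellation. Concretely, applying the product rule to the $\ell$-th summand gives
\[\frac{d}{dx}\!\left[\binom{2k-1}{\ell} x^{2k-1-\ell}(1-x)^\ell\right] = \binom{2k-1}{\ell}(2k-1-\ell)\,x^{2k-2-\ell}(1-x)^\ell - \binom{2k-1}{\ell}\,\ell\,x^{2k-1-\ell}(1-x)^{\ell-1}.\]
I would call the first piece the ``positive contribution from $\ell$'' and the second the ``negative contribution from $\ell$''.

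Next I would reindex the negative contributions by setting $\ell' = \ell - 1$, so the negative contribution from original index $\ell \in \{1,\dots,k-1\}$ becomes $-\binom{2k-1}{\ell'+1}(\ell'+1)\,x^{2k-2-\ell'}(1-x)^{\ell'}$ for $\ell' \in \{0,\dots,k-2\}$. Using the standard binomial identity $\binom{2k-1}{\ell}(2k-1-\ell) = \binom{2k-1}{\ell+1}(\ell+1)$, the positive contribution from $\ell \in \{0,\dots,k-2\}$ exactly cancels the reindexed negative contribution from the same $\ell$. The only surviving piece is the positive contribution at $\ell = k-1$, which equals
\[\binom{2k-1}{k-1}\cdot k\cdot x^{k-1}(1-x)^{k-1} = \frac{(2k-1)!}{(k-1)!\,k!}\cdot k\cdot x^{k-1}(1-x)^{k-1} = \frac{(2k-1)!}{((k-1)!)^2}\,x^{k-1}(1-x)^{k-1}.\]

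There is no real obstacle here; this is a routine computation once one recognizes the telescoping structure. The only step requiring mild care is correctly matching indices after the shift and verifying the binomial identity $(2k-1-\ell)\binom{2k-1}{\ell} = (\ell+1)\binom{2k-1}{\ell+1}$ (which is immediate from writing both sides as $\frac{(2k-1)!}{\ell!(2k-2-\ell)!}$). An equivalent, more conceptual route would be to observe that $p_k(x)$ is the probability that a Binomial$(2k-1,x)$ random variable is at least $k$, i.e.\ the regularized incomplete Beta function $I_x(k,k)$, whose derivative is the Beta$(k,k)$ density $\frac{1}{B(k,k)}x^{k-1}(1-x)^{k-1}$ with $B(k,k) = ((k-1)!)^2/(2k-1)!$; this yields the same formula, but the direct telescoping argument above is self-contained and shorter.
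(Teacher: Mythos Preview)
Your proposal is correct and follows essentially the same approach as the paper: both differentiate term by term, reindex the negative contributions, invoke the identity $(2k-1-\ell)\binom{2k-1}{\ell}=(\ell+1)\binom{2k-1}{\ell+1}$ to obtain a telescoping cancellation, and are left with the single surviving term $\binom{2k-1}{k-1}\cdot k\cdot x^{k-1}(1-x)^{k-1}$. Your additional remark about the regularized incomplete Beta function is a nice conceptual aside not present in the paper, but the core argument is the same.
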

\begin{proof}
We have
\begin{align*}
p_k'(x)&=\sum_{\ell=1}^{k-1}\binom{2k-1}{\ell}\cdot \left((2k-1-\ell)x^{2k-2-\ell}(1-x)^{\ell}-\ell x^{2k-1-\ell}(1-x)^{\ell-1}\right)+\binom{2k-1}{0}\cdot (2k-1)x^{2k-2}\\
&=\sum_{\ell=0}^{k-1}\binom{2k-1}{\ell}\cdot (2k-1-\ell)x^{2k-2-\ell}(1-x)^{\ell}-\sum_{\ell=0}^{k-2}\binom{2k-1}{\ell+1}\cdot (\ell+1)x^{2k-2-\ell}(1-x)^{\ell}\\
&=\sum_{\ell=0}^{k-1}\left(\binom{2k-1}{\ell}(2k-1-\ell)-\binom{2k-1}{\ell+1}(\ell+1)\right)x^{2k-2-\ell}(1-x)^{\ell}+\binom{2k-1}{k-1}\cdot kx^{k-1}(1-x)^{k-1}\\
&=\frac{(2k-1)!}{(k-1)!^2}\cdot x^{k-1}(1-x)^{k-1},
\end{align*}
where in the second-last step we used that
\[\binom{2k-1}{\ell}(2k-1-\ell)-\binom{2k-1}{\ell+1}(\ell+1)=\frac{(2k-1)!}{\ell!\cdot (2k-2-\ell)!}-\frac{(2k-1)!}{\ell!\cdot (2k-2-\ell)!}=0\]
for all $\ell=0,\dots,k-1$.
\end{proof}

\begin{lemma}\label{lemma-function-convex}
For every $k\geq 1$, the function $p_k(x)$ is convex on the interval $[0,1/2]$.
\end{lemma}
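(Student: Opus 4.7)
The plan is to prove convexity by differentiating the closed form of $p_k'(x)$ given in Lemma \ref{lemma-function-derivative} and checking that the resulting expression $p_k''(x)$ is nonnegative on $[0,1/2]$.

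First I would dispose of the trivial case $k=1$. Here $p_1(x)=\binom{1}{0}x^1(1-x)^0=x$ is linear, hence convex on all of $[0,1]$, and in particular on $[0,1/2]$.

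For $k\geq 2$, I would start from the formula
\[p_k'(x)=\frac{(2k-1)!}{(k-1)!^2}\cdot x^{k-1}(1-x)^{k-1}\]
proved in Lemma \ref{lemma-function-derivative}, and differentiate the factor $x^{k-1}(1-x)^{k-1}$ using the product rule. This gives
\[\frac{d}{dx}\bigl[x^{k-1}(1-x)^{k-1}\bigr]=(k-1)x^{k-2}(1-x)^{k-1}-(k-1)x^{k-1}(1-x)^{k-2}=(k-1)x^{k-2}(1-x)^{k-2}(1-2x),\]
so
\[p_k''(x)=\frac{(2k-1)!}{(k-1)!^2}\cdot(k-1)\cdot x^{k-2}(1-x)^{k-2}(1-2x).\]
Now for $x\in[0,1/2]$ each of the factors $x^{k-2}$, $(1-x)^{k-2}$, and $(1-2x)$ is nonnegative (the first two because $k-2\geq 0$ and $x,1-x\in[0,1]$, and the third because $x\leq 1/2$), and the leading constant is positive. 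Hence $p_k''(x)\geq 0$ on $[0,1/2]$, which means $p_k$ is convex on this interval.

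There is essentially no obstacle here, since the heavy lifting was done in Lemma \ref{lemma-function-derivative}; once $p_k'$ is expressed as a single monomial in $x$ and $1-x$, the sign analysis of $p_k''$ reduces to noting that $1-2x\geq 0$ on $[0,1/2]$. The only thing to be mildly careful about is that the formula for $p_k''$ involves the factor $x^{k-2}(1-x)^{k-2}$, which is only manifestly a polynomial when $k\geq 2$; this is precisely why I separated out the case $k=1$ at the start.
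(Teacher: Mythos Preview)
Your proof is correct and follows essentially the same approach as the paper: both arguments use the closed form of $p_k'(x)$ from Lemma \ref{lemma-function-derivative} and then show that $p_k'$ is non-decreasing on $[0,1/2]$. The paper phrases this last step slightly differently, observing directly that $x(1-x)$ is non-negative and increasing on $[0,1/2]$ (so that $x^{k-1}(1-x)^{k-1}$ is non-decreasing there), rather than computing $p_k''$ explicitly; this also avoids the need to single out $k=1$.
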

\begin{proof}
Note that the function $x(1-x)=x-x^2=1/4-((1/2)-x)^2$ is non-negative and monotonically increasing on the interval $[0,1/2]$. Hence the function $x^{k-1}(1-x)^{k-1}$ is monotonically non-decreasing on $[0,1/2]$. By Lemma \ref{lemma-function-derivative}, this means that the derivative $p_k'(x)$ is monotonically non-decreasing on the interval $[0,1/2]$. Hence $p_k(x)$ is convex on this interval.
\end{proof}

\begin{lemma}\label{lemma-less-than-1}
For every $k\geq 1$ and every integer $n\geq 1$, we have $n\cdot p_k(1/n)\leq 1$.
\end{lemma}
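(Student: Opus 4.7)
The plan is to give a clean probabilistic proof by exhibiting $n$ disjoint events that each occur with probability $p_k(1/n)$. The key combinatorial observation is that among only $2k-1$ trials, one cannot simultaneously have $k$ or more trials equal to two distinct values, since that would require at least $2k$ trials.

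Concretely, I would first recall the probabilistic interpretation of $p_k(x)$ stated in Section \ref{sect-proof-minimum-probability}: $p_k(x)$ is the probability that, among $2k-1$ independent Bernoulli trials with success probability $x$, at least $k$ trials are successes. I then set up the following experiment. Let $Y_1,\dots,Y_{2k-1}$ be independent random variables, each uniformly distributed on $\{1,\dots,n\}$. For each $j\in\{1,\dots,n\}$, define the event
\[\mathcal{E}_j=\{\text{at least }k\text{ of the values }Y_1,\dots,Y_{2k-1}\text{ are equal to }j\}.\]
For fixed $j$, the indicators $\mathbf{1}[Y_i=j]$ for $i=1,\dots,2k-1$ are i.i.d.\ Bernoulli$(1/n)$, so by the interpretation of $p_k$, we have $\PP(\mathcal{E}_j)=p_k(1/n)$ for every $j$.

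Next I would observe that the events $\mathcal{E}_1,\dots,\mathcal{E}_n$ are pairwise disjoint: if both $\mathcal{E}_j$ and $\mathcal{E}_{j'}$ held for distinct $j,j'$, then at least $k$ of the $Y_i$'s would equal $j$ and at least $k$ would equal $j'$, accounting for at least $2k$ indices in total, contradicting that there are only $2k-1$ such indices. Therefore
\[n\cdot p_k(1/n)=\sum_{j=1}^n \PP(\mathcal{E}_j)=\PP\!\left(\bigcup_{j=1}^n \mathcal{E}_j\right)\leq 1,\]
which is exactly the claim of the lemma.

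There is essentially no obstacle in this approach, and I expect the main temptation to avoid is to try an algebraic proof directly from the binomial expansion of $p_k(1/n)$, which looks more painful. The probabilistic argument above is morally the same calculation that underlies Proposition \ref{prop-minimum-probability-2}, but stated entirely in terms of i.i.d.\ uniform draws on $[n]$ rather than in terms of rankings, so it is logically independent of that proposition and of Proposition \ref{propo-optimization} (and hence may safely be used in the proof of the latter).
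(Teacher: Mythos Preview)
Your proof is correct and is genuinely different from the paper's argument. The paper proves the lemma analytically: it first handles $n=1$ by direct evaluation, and for $n\ge 2$ it invokes the convexity of $p_k$ on $[0,1/2]$ (Lemma~\ref{lemma-function-convex}, which in turn rests on the derivative formula of Lemma~\ref{lemma-function-derivative}) together with $p_k(0)=0$ and $p_k(1/2)=1/2$ (the latter from Lemma~\ref{lemma-function-anti-symmetry}) to apply Jensen's inequality and conclude $p_k(1/n)\le 1/n$. Your route instead realizes $n\cdot p_k(1/n)$ as the total probability of $n$ pairwise disjoint events in a single experiment with $2k-1$ i.i.d.\ uniform draws on $[n]$, using only the pigeonhole observation that two distinct values cannot each appear $k$ times among $2k-1$ draws. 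This is shorter and entirely self-contained: it needs none of Lemmas~\ref{lemma-function-anti-symmetry}--\ref{lemma-function-convex}. The paper's approach, by contrast, recycles machinery that is being developed anyway for the proof of Proposition~\ref{propo-optimization}, so within that section it costs nothing extra; your argument would stand on its own even outside that context. Your remark that the argument is logically independent of Propositions~\ref{propo-optimization} and~\ref{prop-minimum-probability-2} is accurate, so there is no circularity in using it where Lemma~\ref{lemma-less-than-1} is invoked.
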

\begin{proof}
For $n=1$, we have $1\cdot p_k(1/1)=p_k(1)= \binom{2k-1}{0}=1$, so the desired inequality is true.

For $n\geq 2$, note that $1/n\in [0,1/2]$. By Lemma \ref{lemma-function-convex} the function $p_k(x)$ is convex on the interval $[0,1/2]$ and by applying Jensen's inequality we obtain
\[p_k(1/n)=p_k\left(\frac{2}{n}\cdot \frac{1}{2}+\left(1-\frac{2}{n}\right)\cdot 0\right)\leq \frac{2}{n}\cdot p_k(1/2)+\left(1-\frac{2}{n}\right)\cdot p_k(0)=\frac{2}{n}\cdot \frac{1}{2}+\left(1-\frac{2}{n}\right)\cdot 0=\frac{1}{n}.\]
Here, we used that $p_k(1/2)=1/2$ by Lemma \ref{lemma-function-anti-symmetry} applied to $x=1/2$ and that $p_k(0)=0$. Hence $n\cdot p_k(1/n)\leq 1$, as desired.
\end{proof}

Let us now prove Proposition \ref{propo-optimization}.

\begin{proof}[Proof of Proposition \ref{propo-optimization}]
First, let us consider the case $k=1$. Then $p_k(x)=x$ for all $x\in [0,1]$, and so for any real numbers $x_1,\dots,x_n\in [0,1]$ with $x_1+\dots+x_n=1$ we clearly have $p_k(x_1)+\dots+p_k(x_n) =x_1+\dots+x_n=1=n\cdot p_k(1/n)$. So let us from now on assume that $k\geq 2$.

Recall that we need to show that $p_k(x_1)+\dots+p_k(x_n)\geq n\cdot p_k(1/n)$ for any real numbers $x_1,\dots,x_n\in [0,1]$ with $x_1+\dots+x_n=1$. We may thus assume that $x_1,\dots,x_n\in [0,1]$ are chosen to minimize $p_k(x_1)+\dots+p_k(x_n)$ under the constraint $x_1+\dots+x_n=1$.

First, consider the case that we have $x_1,\dots,x_n\in [0,1/2]$. Then, as the function $p_k(x)$ is convex on $[0,1/2]$ by Lemma \ref{lemma-function-convex}, Jensen's inequality implies 
\[p_k(x_1)+\dots+p_k(x_n)\geq n\cdot p_k((x_1+\dots+x_n)/n)=n\cdot p_k(1/n),\]
as desired.

Next, let us consider the case that we have $x_i=1$ for some $i\in \{1,\dots,n\}$. Then the remaining variables $x_1,\dots,x_{i-1},x_{i+1},\dots,x_n$ must all be zero, and we have
\[p_k(x_1)+\dots+p_k(x_n)=p_k(1)+(n-1)\cdot p_k(0)=1+(n-1)\cdot 0=1\geq n\cdot p_k(1/n),\]
where the last inequality is by Lemma \ref{lemma-less-than-1}.

So it remains to consider the case that we have $x_i\in (1/2,1)$ for some $i\in \{1,\dots,n\}$. Without loss of generality, let us assume that $x_n\in (1/2,1)$. Now, since $x_1+\dots +x_n=1$, at least one of  $x_1,\dots,x_{n-1}$ must be positive. Again without loss of generality let us assume that $x_1>0$, then $0<x_1<x_1+x_n\leq 1$. Now, for $t\in [0,x_1+x_n]$ consider the function $g(t)=p_k(t)+p_k(x_2)+\dots+p_k(x_{n-1})+p_k(x_1+x_n-t)$. As $t+x_2+\dots+x_{n-1}+(x_1+x_n-t)=x_1+\dots+x_n=1$, by the choice of $x_1,\dots,x_n$ to minimize $p_k(x_1)+\dots+p_k(x_n)$, we must have
\[g(t)=p_k(t)+p_k(x_2)+\dots+p_k(x_{n-1})+p_k(x_1+x_n-t)\geq p_k(x_1)+\dots+p_k(x_n)=g(x_1)\]
for all $t\in [0,x_1+x_n]$. In other words, the function $g(t)$ has a minimum at $t=x_1$. Hence, using that $0<x_1<x_1+x_n$, the derivative $g'(t)=p_k'(t)-p_k'(x_1+x_n-t)$ of  $g(t)$ at $t=x_1$ must satisfy $g'(x_1)=0$. From Lemma \ref{lemma-function-derivative}, we obtain
\[0=g'(x_1)=p_k'(x_1)-p_k'(x_1+x_n-x_1)=p_k'(x_1)-p_k'(x_n)=\frac{(2k-1)!}{(k-1)!^2}\cdot \left(x_1^{k-1}(1-x_1)^{k-1}-x_n^{k-1}(1-x_n)^{k-1}\right).\]
Hence $x_1^{k-1}(1-x_1)^{k-1}=x_n^{k-1}(1-x_n)^{k-1}$ and, as $k\geq 2$, this implies that $x_1(1-x_1)=x_n(1-x_n)$. Therefore
\[(x_1-x_n)\cdot (x_1+x_n-1)=x_n(1-x_n)-x_1(1-x_1)=0.\]
and consequently we must have $x_1=x_n$ or $x_1+x_n=1$. However, as $x_n\in (1/2,1)$, the former equation is impossible (as otherwise $x_1+\dots+x_n\geq x_1+x_n=2x_n>1$). Thus, we can conclude that $x_1+x_n=1$ and therefore $x_2=\dots=x_{n-1}=0$. Now,
\[p_k(x_1)+\dots+p_k(x_n)=p_k(x_1)+p_k(x_n)+(n-2)\cdot p_k(0)=p_k(x_1)+p_k(1-x_1)+(n-2)\cdot 0=1\geq n\cdot p_k(1/n),\]
where in the second-last step we used Lemma \ref{lemma-function-anti-symmetry} and in the last step Lemma \ref{lemma-less-than-1}. This finishes the proof of Proposition \ref{propo-optimization}.
\end{proof}

\end{document}